%
%
%
%
%
%
%
\documentclass[%
reprint,
nofootinbib,
 amsmath,amssymb,
 aps,
pra,
floatfix,
]{revtex4-2}

\usepackage{graphicx}
\usepackage{dcolumn}
\usepackage{bm}


\usepackage{graphicx}
\usepackage{amsmath}
\usepackage{amsthm}
\usepackage[plain, noend]{algorithm}
\usepackage{algpseudocode}
\usepackage{stmaryrd}
\usepackage{enumitem}
\usepackage{float}
\usepackage{thmtools}
\usepackage{cleveref}
\usepackage{thm-restate}
\usepackage[all]{xy}
\usepackage{subcaption}
\usepackage{qcircuit}
\usepackage{tikz}
\usepackage{tabularx}
\usetikzlibrary{shapes.geometric} 
\usetikzlibrary{arrows.meta} 
\usetikzlibrary{arrows.meta, positioning}

\newcommand{\Emp}{\textrm{Emp}\,}
\newcommand{\Dup}{\textrm{Dup}\,}

\newenvironment{steps}
{
    \begin{enumerate}[label=\textbf{Step \arabic*:}, leftmargin=*]
}
{
    \end{enumerate}
}

\newtheorem{theorem}{Theorem}[section]
\newtheorem{lemma}[theorem]{Lemma}
\newtheorem{corollary}[theorem]{Corollary}
\newtheorem{proposition}[theorem]{Proposition}
\newtheorem{definition}[theorem]{Definition}

\newtheorem{problem}[theorem]{Problem}

\begin{document}

\preprint{APS/123-QED}

\title{Minimum Synthesis Cost of CNOT Circuits}
\thanks{This document is a preprint uploaded to arXiv on August 15th, 2024.}%

\author{Alan Bu}
 \email{abu@college.harvard.edu}
 \affiliation{Harvard University, Cambridge, MA 02138} 
\author{Evan Fan}%
 \email{etfan@exeter.edu}
\affiliation{Phillips Exeter Academy, Exeter, NH 03833}%
\author{Robert Joo}
 \email{sjoo@exeter.edu}
 \affiliation{Phillips Exeter Academy, Exeter, NH 03833}



\date{\today}

\begin{abstract}
Optimizing the size and depth of CNOT circuits is an active area of research in quantum computing and is particularly relevant for circuits synthesized from the Clifford + T universal gate set. Although many techniques exist for finding short syntheses, it is difficult to assess how close to optimal these syntheses are without an exponential brute-force search. We use a novel method of categorizing CNOT gates in a synthesis to obtain a strict lower bound computable in $O(n^{\omega})$ time on the minimum number of gates needed to synthesize a given CNOT circuit, where $\omega$ denotes the matrix multiplication constant and $n$ is the number of qubits involved. Applying our framework, we prove that $3(n-1)$ gate syntheses of the $n$-cycle circuit are optimal and provide insight into their structure. We also generalize this result to permutation circuits.

For linear reversible circuits with $ n = 3, 4, 5$ qubits, our lower bound is optimal for 100\%, 67.7\%, and 23.1\% of circuits and is accurate to within one CNOT gate in 100\%, 99.5\%, and 83.0\% of circuits respectively. We also introduce an algorithm that efficiently determines whether certain circuits can be synthesized with fewer than $n$ CNOT gates.
\end{abstract}

\maketitle

\section{Introduction}

Minimizing the number of CNOT gates necessary to synthesize a given quantum circuit is a well-studied task in quantum computing. The authors of \cite{barenco1995elementary, divincenzo1995two} have developed algorithms to optimize the number of CNOT gates in quantum circuits, and \cite{patel2003efficient, vatan2004optimal, shende2004smaller, shende2005synthesis, vartiainen2004efficient, knill1995approximation} give asymptotics on the minimum number of CNOT gates required in circuit synthesis. The approaches presented in these papers have been used to minimize both the \emph{size} and \emph{depth} of a circuit, in some cases accounting for the qubit connectivity of the quantum computer and the ancilla count. In general, these methods have approached asymptotic optimality \cite{nash2020quantum, jiang2020optimal, vatan2004optimal,  kang2023cnot}. 

Minimizing the number of CNOT operations is important because CNOT operations and single qubit gates form a universal set for quantum computing. Both the speed and error rate of quantum algorithms such as Shor's algorithm and Grover's search improve with decreases in CNOT gate count \cite{liu2023minimizing, brickman2005implementation}. 

\subsection{Main Results}

Although a large volume of active research has been dedicated to optimizing CNOT circuits, we know little about how to obtain firm lower bounds on the minimum number of CNOT gates required to synthesize a given circuit besides resorting to brute-force search. Consequently, an exponential search time is required to verify the optimality of a CNOT circuit decomposition. In this paper, we focus on quantum computers without topological constraints or quantum reconfigurability. Our main result utilizes a novel categorization of the CNOT gates in a CNOT decomposition into link, middle, and cut gates to find strict lower bounds in $O(n^{\omega})$ time on the minimum number of CNOT gates necessary to synthesize a given circuit, where $\omega$ denotes the matrix multiplication constant \cite{duan2023faster} and $n$ is the number of qubits involved. We apply this method to all $3, 4, 5$-qubit circuits. In these cases, our lower bound is exact in $100\%, 67.7\%,$ and $23.1\%$ of circuits, and off from the minimum by at most one in $100\%, 99.5\%$, and $83.0\%$ of circuits respectively. 

Notably, \cite{liu2024realization, planat2017magic, bataille2022quantum, vatan2004optimal} provided constructions which use $3(n-1)$ CNOT gates to synthesize an $n$-cycle circuit, but did not disprove the existence of syntheses with fewer CNOT gates. By applying our lower bound, we prove that $3(n-1)$ gates is optimal, eliminating the need for further CNOT count optimization of $n$-cycle circuits in future research. More generally, we show the minimum number of CNOT gates needed to synthesize permutation circuits is $3(n-k)$, where $k$ is the number of disjoint cycles in the permutation. We remark that our framework not only yields a certificate of optimality, but also shows that the gates in any $3(n-1)$ gate synthesis of an $n$-cycle circuit can have its CNOT gates partitioned into three sets of size $n - 1$, each of which forms a spanning tree over the $n$ qubits, all without needing to resort to brute-force search.

Finally, Jiang et al. \cite{jiang2020optimal} proved that determining the minimum CNOT count of an arbitrary CNOT circuit with topological constraints is $\textrm{NP}$-hard. Finding the computational complexity of the same problem without topological constraints is an ongoing area of research \cite{kang2023cnot, murray2017non, amy2018controlled}. We introduce the linkability problem, a subclass of this problem, with the goal of efficiently determining whether certain circuits can be synthesized with fewer than $n$ CNOT gates. Combining our framework with a graph-theoretic approach, we give an $O(n^\omega)$ algorithm for this problem.

\subsection{Organization of the Paper}

In Section \ref{preliminaries}, we introduce conventions and definitions used throughout the paper. In Section \ref{allgates}, we introduce the categorization of CNOT gates into link, middle, and cut gates, as well as derive a lower bound on the number of link and cut gates necessary to synthesize a circuit. In Section \ref{middlegates}, we prove results regarding the effects of middle gates on a circuit and derive a lower bound on the number of middle gates necessary to synthesize a circuit. In Section \ref{mainalgo}, we combine all of these results alongside a few additional observations to yield our lower bound and apply the bound to $n$-cycle and permutation circuits to determine their minimum CNOT gate count. In Section \ref{linkability}, we introduce and prove our polynomial-time algorithm for the linkability problem. In Section \ref{results}, we assess the performance of our lower bound in the $n = 3, 4, 5$ cases. In Section \ref{conclusion}, we summarize our main results and raise future directions.

\section{Preliminaries and Notations}
\label{preliminaries}
We use the notation $[n] = \{1,2,\ldots,n\}$. We use $M_{i,j}$ to denote the value in the $i$th row and $j$th column of a matrix $M$. We use $e$ to denote the identity permutation and $T_{ij}$ to denote the $(i, j)$-transposition. We denote the identity matrix by $\mathbf{I}$. For a given set $A$ and permutation $\sigma$, we let $\sigma(A)$ denote the image of $A$ under $\sigma$. We let $\textbf{1}_{i \in A}$ denote the indicator function, where $\textbf{1}_{i \in A} = 1$ if $i \in A$ and $\textbf{1}_{i \in A}=0$ otherwise. We say that an $n \times n$ matrix $M$ is a \emph{reversible binary matrix} if $M \in GL_n(\mathbb{F}_2)$. We define $\Emp M$ to be the number of all-zero rows in $M$ and $\Dup M$ to be the number of disjoint pairs of duplicate, not all-zero, rows in $M$.

\subsection{CNOT Synthesis of Circuits}

In this paper, we lower bound the minimum number of CNOT gates required to synthesize a given linear reversible circuit, which can be represented as an $n \times n$ reversible binary matrix. Rigorously, we define the problem as follows, borrowing notation from \cite{jiang2020optimal}.

\begin{definition}
    A \emph{CNOT gate} or \emph{CNOT operation} takes as input two binary values $x_i$ and $x_j$ representing the \emph{control} and \emph{target} of the CNOT gate, respectively, and replaces the pair $(x_i,x_j)$ with $(x_i, x_i \oplus x_j)$. The effect of a CNOT gate with control $i$ and target $j$ on a set of $n$ qubits can be represented as multiplying by matrix $M$ with $M_{i,j} = 1$, all diagonal entries equal to $1$, and all other entries equal to $0$. We denote this CNOT gate, which adds the $i$th row of a matrix to the $j$th row, as $\textrm{CNOT}(i, j)$.
\end{definition}

\begin{definition}
    For any $n \times n$ reversible binary matrix $M$, we define its \emph{size} $s(M)$ to be one less than the length of the shortest sequence $M_1, M_2, \ldots, M_{k+1}$ of $n \times n$ reversible binary matrices with $M_1 = \textbf{I}$ and $M_{k+1} = M$ such that $M_{i+1}$ can be obtained from $M_i$ through a single CNOT operation for all $i \in [k]$.\footnote{We note that our usage of the term ``size'' differs from its conventional definition, where it typically refers to the number of CNOT gates in a given CNOT synthesis. In this paper, when we refer to the ``size'' of a circuit, we mean the minimum number of CNOT gates required in any synthesis of the circuit.} We call any such minimal sequence a \emph{CNOT synthesis} of $M$. Furthermore, this sequence gives a decomposition of $M$ as a product of CNOT gate matrices, which we call a \emph{CNOT decomposition} of $M$. An example is shown in \Cref{fig:decomposition}. 
\end{definition}

\begin{figure*}[!hbtp]
    \centering
    \begin{subfigure}{.13\textwidth} 
        \centering
        \renewcommand{\arraystretch}{1.5} 
        \scalebox{1.3}{
        $\begin{bmatrix}
            1 & 0 & 0 \\
            0 & 1 & 0 \\
            0 & 0 & 1
        \end{bmatrix}$
        }
         \vspace{0.18cm}
        \caption*{Initial matrix}
    \end{subfigure}
    \hspace{0.02\textwidth} 
    \begin{subfigure}{0.02\textwidth}
        \hspace{-0.63cm}
        \scalebox{1.7}{
        $\to$
        }
        \vspace{1.32cm}
    \end{subfigure}
    \hspace{0.02\textwidth} 
    \begin{subfigure}{.3\textwidth} 
        \centering
        \[
        \Qcircuit @C=0.8em @R=1.5em {
            & x_1 \quad \quad &\ctrl{1} & \ctrl{2} & \targ & \ctrl{1} & \qw & \qw & \qw & \; x_2\\
            & x_2 \quad \quad & \targ & \qw & \ctrl{-1} & \targ & \targ & \ctrl{1} & \qw & \; x_3\\
            & x_3 \quad \quad & \qw & \targ & \qw & \qw & \ctrl{-1} & \targ & \qw & \; x_1
        }
        \]
        \caption*{CNOT decomposition}
    \end{subfigure}
    \hspace{0.02\textwidth} 
    \begin{subfigure}{0.02\textwidth}
        \hspace{-0.65cm}
        \scalebox{1.7}{
        $\to$
        }
        \vspace{1.32cm}
    \end{subfigure}
    \hspace{0.02\textwidth} 
    \begin{subfigure}{.13\textwidth} 
        \centering
        \renewcommand{\arraystretch}{1.5} 
        \scalebox{1.3}{
        $\begin{bmatrix}
            0 & 1 & 0 \\
            0 & 0 & 1 \\
            1 & 0 & 0
        \end{bmatrix}$
        }
        \vspace{0.18cm}
        \caption*{$3$-cycle matrix}
    \end{subfigure}
    \caption{CNOT decomposition of the $3$-cycle permutation operation.}
    \label{fig:decomposition}
\end{figure*}
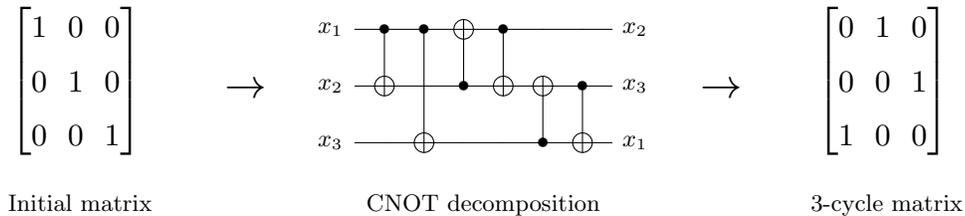

The authors of \cite{jiang2020optimal,maslov2022depth} also minimized the number of parallel CNOT operations required to synthesize a given circuit, as fully disjoint CNOT operations can be performed simultaneously on a quantum computer to save time.

\begin{definition}
    A \emph{parallel CNOT gate} or \emph{parallel CNOT operation} is a composition of arbitrarily many CNOT gates such that no qubit is operated on by more than one such gate. Rigorously, let $x_1, x_2, \ldots, x_k$ and $y_1, y_2, \ldots, y_k$ be $2k$ distinct numbers in $[n]$. A parallel CNOT gate can be represented by the matrix $M$ with $M_{x_i, y_i} = 1$ for $i \in [k]$, all diagonal entries equal to $1$, and all other entries equal to $0$.
\end{definition}

An example of a parallel CNOT decomposition is given in \Cref{fig:parallel}. 

\begin{figure*}[!hbtp]
    \centering 
    \begin{subfigure}{.30\textwidth}
        \centering
        \scalebox{1.3} {
        $\begin{bmatrix}
            0 & 1 & 0 & 0 & 0 & 0 \\ 
            1 & 0 & 1 & 0 & 0 & 1 \\ 
            0 & 0 & 1 & 0 & 0 & 0 \\ 
            0 & 0 & 1 & 1 & 0 & 0 \\ 
            0 & 0 & 0 & 0 & 1 & 1 \\ 
            0 & 1 & 1 & 0 & 0 & 1
        \end{bmatrix}$
        }
        \vspace{0.25cm}
    \end{subfigure} 
    \begin{subfigure}{.40\textwidth} 
        \centering 
        $
        \Qcircuit @C=1.0em @R=1.2em {
            & \lstick{x_1} & \qw & \ctrl{1} & \targ & \ctrl{5} & \qw & \qw & \qquad \qquad \; x_2 \\
            & \lstick{x_2} & \qw & \targ & \ctrl{-1} & \qw & \targ & \qw & \qquad \qquad \; x_1 \oplus x_3 \oplus x_6 \\
            & \lstick{x_3} & \qw & \ctrl{1} & \ctrl{3} & \qw & \qw & \qw & \qquad \qquad \; x_3 \\
            & \lstick{x_4} & \qw & \targ & \qw & \qw & \qw & \qw & \qquad \qquad \; x_3 \oplus x_4 \\
            & \lstick{x_5} & \qw & \targ & \qw & \qw & \qw & \qw & \rstick{x_5 \oplus x_6} \\
            & \lstick{x_6} & \qw & \ctrl{-1} & \targ & \targ & \ctrl{-4} & \qw & \qquad \qquad \; x_2 \oplus x_3 \oplus x_6
        }
        $
        \vspace{0.001cm}
    \end{subfigure}
    \caption{An illustration of a depth $4$ parallel CNOT decomposition.}
    \label{fig:parallel}
\end{figure*}
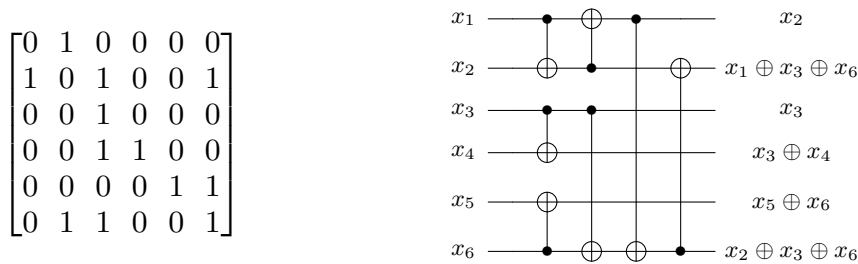

\begin{definition}
    For any $n \times n$ reversible binary matrix $M$, we define its \emph{depth} $d(M)$ to be one less than the length of the shortest sequence $M_1, M_2, \ldots, M_{k+1}$ of $n \times n$ reversible binary matrices with $M_1 = \textbf{I}$ and $M_{k+1} = M$ such that $M_{i+1}$ can be obtained from $M_i$ through a single parallel CNOT operation for all $i \in [k]$.\footnote{See above.}
\end{definition}

\subsection{Circuit-derived Graphs}

We further introduce two graphs fundamental to the lower bound introduced in this paper.

\begin{definition}
    Given an $n\times n$ reversible binary matrix $M$, we define its \emph{vertex-connectivity graph} to be $G_v(M) = (V,E)$, where $V = [n]$ and $(i,j)\in E$ if and only if $i \neq j$ and at least one of $M_{i,j}$ or $M_{j,i}$ is equal to one. We define $v(M)$ to be the number of connected components of this graph.
\end{definition}

\begin{definition}
    Given an $n\times n$ reversible binary matrix $M$, we define its \emph{edge-connectivity graph} to be $G_e(M) = (V,E)$, where $V = \{R_1,R_2,\ldots, R_n, C_1, \ldots, C_n\}$. For each row $R_i$ and column $C_j$ of $M$, we connect $R_i$ and $C_j$ with an edge in $G_e(M)$ if and only if $M_{i,j} = 1$. We define $e(M)$ to be the number of connected components of this graph. 
\end{definition}

Examples of the vertex and edge-connectivity graphs of a matrix are shown in \Cref{fig:susdef}.

\begin{figure*}[!hbtp]
    \centering 
    \hfill
    \begin{subfigure}{.25\textwidth}
        \centering
        \scalebox{1.3}{
        $\begin{bmatrix}
            0 & 0 & 1 & 0 & 0 \\ 
            0 & 1 & 0 & 1 & 0 \\ 
            1 & 0 & 0 & 0 & 0 \\
            0 & 0 & 0 & 0 & 1 \\
            0 & 1 & 0 & 0 & 0
        \end{bmatrix}$
        }
        \vspace{0.7cm}
        \caption{Matrix $M$}
    \end{subfigure}
    \hfill
    \begin{subfigure}{.35\textwidth}
        \centering 
        \begin{tikzpicture}[>={Stealth[width=2mm,length=2mm]}, node distance=1.5cm, every node/.style={circle, draw}]
            \node[regular polygon,draw=none, minimum size = 3.65 cm] (e) at (5,0) {};
             \foreach \x in {1,2,...,5}
            \node[] at (e.corner 1) (1) {1};
            \node[] at (e.corner 2) (2) {2};
            \node[] at (e.corner 3) (3) {3};
            \node[] at (e.corner 4) (4) {4};
            \node[] at (e.corner 5) (5) {5};
            \draw (1) -- (3);
            \draw (2) -- (4);
            \draw (2) -- (5);
            \draw (4) -- (5);
        \end{tikzpicture}
        \caption{$G_v(M)$}
    \end{subfigure}
    \begin{subfigure}{.35\textwidth}
        \centering 
        \begin{tikzpicture}[x=0.8cm,y=0.8cm]
            \foreach \x in {1,2,3,4,5}
            {
                \filldraw [black] (9,3-\x) circle (1.5pt);
                \node[anchor=east] at (9,3-\x) {$R_{\x}$};
                \filldraw [black] (11,3-\x) circle (1.5pt);
                \node[anchor = west] at (11,3-\x) {$C_{\x}$};
            }
            \draw (9,2) -- (11,0);
            \draw (9,1) -- (11,1);
            \draw (9,1) -- (11,-1);
            \draw (9,0) -- (11,2);
            \draw (9,-1) -- (11,-2);
            \draw (9,-2) -- (11,1);
        \end{tikzpicture}
        \caption{$G_e(M)$}
    \end{subfigure}
    \caption{An example of $M$ with $v(M) = 2$ and $e(M) = 4$.}
    \label{fig:susdef}
\end{figure*}
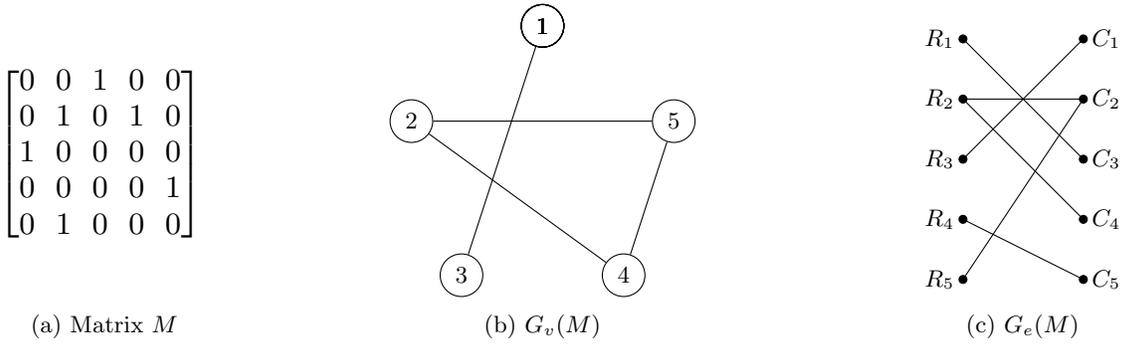

\begin{definition}
    We call a graph $G = (V,E)$ on $n$ vertices a \emph{star graph} if one vertex of $G$ has degree $n-1$ and all other vertices have degree $1$. We call $G$ a \emph{path graph} if we can label its vertices as $v_1, v_2, \ldots, v_n$ such that $(v_i, v_j) \in E$ if and only if $|i-j| = 1$. With slight abuse of notation, we say that a set of $n-1$ CNOT gates is a star or path if the graph $G$ formed by adding an edge between the control and target of each CNOT gate is a star or path graph respectively.
\end{definition}

\subsection{Influence Graphs}

Finally, we introduce two definitions used in our algorithm for the linkability problem.

\begin{definition}
    For a reversible binary $n \times n$ matrix $M$, we define its \emph{influence graph} $\mathcal{I}(M) = (V, A)$ to be the directed graph such that $V = [n]$ and $(i, j) \in A$ if and only if $M_{j,i} = 1$ and $i \neq j$. In each such case, we say that $i$ \emph{influences} $j$. An example is given in \Cref{fig:influences}.
\end{definition}

\begin{definition}
    For a directed graph $G = (V, A)$, we define a transitive reduction $G'$ of $G$ to be another directed graph with the same vertices and a minimal set of directed edges such there is a directed path from vertex $u$ to vertex $v$ in $G'$ if and only if there is a directed path from $u$ to $v$ in $G$.
\end{definition}

\begin{figure*}[!hbtp]
    \centering 
    \begin{subfigure}{.35\textwidth}
        \centering
        \scalebox{1.3}{
        $\begin{bmatrix} 
             1 & 0 & 1 & 1 \\ 
             1 & 1 & 1 & 0 \\ 
             0 & 0 & 1 & 0 \\ 
             0 & 0 & 1 & 1
        \end{bmatrix}$
        }
        \vspace{0.8cm}
        \caption{Matrix $M$}
    \end{subfigure}
    \hfill
    \begin{subfigure}{.28\textwidth}
        \centering
        \begin{tikzpicture}[>={Stealth[width=2mm,length=2mm]}, node distance=1.5cm, every node/.style={circle, draw}]
            \node (3) {3};
            \node[below left=of 3] (1) {1};
            \node[below right=of 3] (4) {4};
            \node[below right=of 1] (2) {2};

            \draw[->] (3) -- (1);
            \draw[->] (4) -- (1);
            \draw[->] (1) -- (2);
            \draw[->] (3) -- (2);
            \draw[->] (3) -- (4);
        \end{tikzpicture}
        \caption{Influence graph of $M$}
    \end{subfigure}
    \hfill
    \begin{subfigure}{.35\textwidth}
        \centering
        \begin{tikzpicture}[>={Stealth[width=2mm,length=2mm]}, node distance=1.5cm, every node/.style={circle, draw}]
            \node (3) {3};
            \node[below left=of 3] (1) {1};
            \node[below right=of 3] (4) {4};
            \node[below right=of 1] (2) {2};

            \draw[->] (4) -- (1);
            \draw[->] (1) -- (2);
            \draw[->] (3) -- (4);
        \end{tikzpicture}
        \caption{Transitive reduction of $\mathcal{I}(M)$}
    \end{subfigure}
    \caption{Example of an influence graph and its transitive reduction.}
    \label{fig:influences}
\end{figure*}
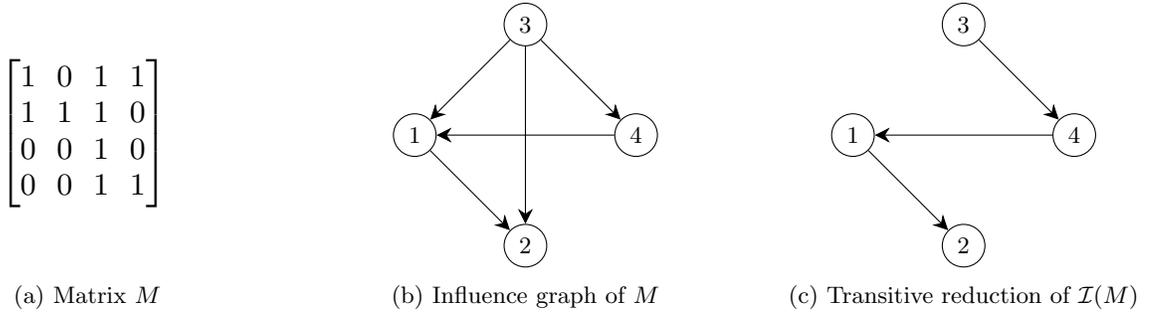

\section{Link, Middle, and Cut CNOT Gates}
\label{allgates}

The key insight of our algorithm is that each gate of a CNOT decomposition can be placed into one of three categories (or none), and finding lower bounds on the number of CNOT gates in each category allows us to obtain a lower bound on the total number of CNOT gates. In this section, we introduce these categories and prove lower bounds on the number of CNOT gates in two of the three categories.

\begin{definition}
    We define a \emph{river} of an $n \times n$ reversible binary matrix $M$ to be a permutation $\sigma \in S_n$ such that $M_{i, \sigma(i)} = 1$ for all $i$. We denote the set of all rivers of $M$ as $S(M)$. For any river $\sigma$, we denote the unique permutation matrix whose only nonzero entries lie on this river as $P_{\sigma}$.
\end{definition}

\begin{definition}
    Let $M, N$ be $n \times n$ reversible binary matrices such that $N$ can be obtained from $M$ through one CNOT operation. We call this CNOT operation a \begin{itemize}
        \item \emph{Link} gate if $v(M) > v(N)$,
        \item \emph{Middle} gate if $S(M) \neq S(N)$,
        \item \emph{Cut} gate if $e(M) < e(N)$.
    \end{itemize}
\end{definition}

We now prove an important result on the effects of link and cut gates on $G_v$ and $G_e$.

\begin{proposition}
    \label{etovclaim} Let $M$ be a reversible binary matrix. If $R_i, R_j$ are connected in $G_e(M)$, then $i, j$ are connected in $G_v(M)$.
\end{proposition}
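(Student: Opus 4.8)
The plan is to show that the connected components of $G_v(M)$ induce a block decomposition of $M$ that is respected by $G_e(M)$. Concretely, let $S \subseteq [n]$ be the connected component of $G_v(M)$ containing $i$. Since $S$ is a union of connected components of $G_v(M)$, there is no edge of $G_v(M)$ with exactly one endpoint in $S$; by the definition of $G_v(M)$ this means that $M_{a,b} = M_{b,a} = 0$ whenever $a \in S$ and $b \in [n]\setminus S$ (such $a,b$ are automatically distinct). In other words, after reordering indices so that the elements of $S$ come first, $M$ is block diagonal with one block indexed by $S$ and the other by $[n]\setminus S$.

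Next I would note that this block structure is inherited by $G_e(M)$. Every edge of $G_e(M)$ has the form $R_a - C_b$ with $M_{a,b} = 1$. If $a \in S$, then $b \in S$ as well, since otherwise $M_{a,b}$ would be an entry of the zero off-diagonal block; symmetrically, if $b \in S$ then $a \in S$. Hence no edge of $G_e(M)$ joins a vertex whose subscript lies in $S$ to a vertex whose subscript lies in $[n]\setminus S$, so the vertex set $\{R_k, C_k : k \in S\}$ is a union of connected components of $G_e(M)$.

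To finish, observe that $R_i$ lies in this union, since $i \in S$. Therefore any $R_j$ connected to $R_i$ in $G_e(M)$ must also lie in it, which forces $j \in S$; that is, $i$ and $j$ belong to the same connected component of $G_v(M)$. (The case $i = j$ is trivial, as a vertex is connected to itself.)

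The only point requiring care is the bookkeeping around the diagonal: $G_e(M)$ contains the edge $R_a - C_a$ exactly when $M_{a,a} = 1$, and this has no counterpart in $G_v(M)$, which records only off-diagonal ones. This is harmless here, since such an edge keeps us within one side of the partition $S \sqcup ([n]\setminus S)$, but it is the reason to phrase the argument in terms of which block an index falls into rather than trying to lift a $G_e$-path to a $G_v$-path edge by edge. I expect no serious obstacle; in particular, invertibility of $M$ is not needed for this implication, although it is relevant to the reverse-type statements used elsewhere in the paper.
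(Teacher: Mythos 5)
Your proof is correct. It takes a mildly different route from the paper: the paper argues directly, taking a path $R_{x_1} C_{y_1} R_{x_2} \cdots R_{x_m}$ in $G_e(M)$ and observing that each consecutive pair $M_{x_k,y_k} = M_{x_{k+1},y_k} = 1$ forces $x_k$ and $x_{k+1}$ to be connected in $G_v(M)$ via $y_k$, i.e., it lifts the $G_e$-path to a $G_v$-walk two edges at a time. You instead argue via the contrapositive-flavored block decomposition: the $G_v$-component $S$ of $i$ gives a zero off-diagonal block, hence $\{R_k, C_k : k \in S\}$ is a union of $G_e$-components, so $j \in S$. The two arguments rest on the same elementary fact (a nonzero off-diagonal entry $M_{a,b}$ is a $G_v$-edge), but yours makes explicit the diagonal bookkeeping that the paper's path-lifting glosses over (when $y_k = x_k$ or $y_k = x_{k+1}$ the intermediate vertex collapses, which is harmless but unremarked in the paper), and it yields the slightly stronger structural statement that $G_v$-components induce unions of $G_e$-components --- essentially the block form the paper later exploits in the proof of Lemma~\ref{lmc}. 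The paper's version is more local and marginally shorter.
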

    \begin{proof}
        As $R_i, R_j$ are connected in $G_e(M)$, there exists a path $R_{x_1} C_{y_1} R_{x_2} C_{y_2} \cdots C_{x_{m-1}} R_{x_m}$ of $G_e(M)$ such that $x_1 = i$ and $x_m = j$. Thus, $x_i, x_{i+1}$ are connected in $G_v(M)$ for all $i \in [m-1]$ since $M_{x_i, y_i} = M_{y_i, x_{i+1}} = 1$. Thus $i$ is connected to $j$ in $G_v(M)$ as desired.  
    \end{proof}

\begin{lemma}
    \label{linkdec}
    Let $M, N$ be $n \times n$ reversible binary matrices such that $N$ can be obtained from $M$ through one CNOT operation. Then, $|v(M) - v(N)| \leq 1$ and $|e(M) - e(N)| \leq 1$. Furthermore, if $v(N) = v(M) - 1$, then $e(N) = e(M) - 1$.  
\end{lemma}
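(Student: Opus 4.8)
The plan is to prove all three claims from a single structural observation about the CNOT operation. Write the operation as $\textrm{CNOT}(i,j)$, so $N$ is obtained from $M$ by replacing row $j$ with (row $j$)$\,\oplus\,$(row $i$) and leaving every other row fixed. Two facts drive everything: (i) the only matrix entries that change lie in row $j$, so in $G_v$ only edges at vertex $j$ can appear or vanish and in $G_e$ only edges at $R_j$ can appear or vanish; and (ii) a CNOT operation over $\mathbb{F}_2$ is an involution, so $M$ is equally obtainable from $N$ by one CNOT operation — this upgrades every one-sided inequality to a two-sided one by swapping $M$ and $N$.

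For $|v(M)-v(N)|\le 1$ it suffices, by (ii), to prove $v(N)\le v(M)+1$. Let $H$ be $G_v(N)$ with the extra edge $(i,j)$ inserted, and write $c(\cdot)$ for the number of connected components of a graph; since adding an edge lowers the component count by at most one, $v(N)\le c(H)+1$, so I only need $c(H)\le v(M)$, which holds once every edge of $G_v(M)$ joins two vertices in the same component of $H$. That is a short case check on an edge $(a,b)$ of $G_v(M)$: if $j\notin\{a,b\}$ the edge survives in $G_v(N)$; if (say) $a=j$ and the edge is present because $M_{b,j}=1$, it survives; and if it is present only because $M_{j,b}=1$, then either $M_{i,b}=0$ and the edge survives in $G_v(N)$, or $M_{i,b}=1$, in which case $(i,b)$ is an edge of $G_v(N)$ (row $i$ is untouched) and, together with the inserted edge $(i,j)$, it connects $j$ to $b$ in $H$. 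The bound $|e(M)-e(N)|\le 1$ is proved the same way, now inserting the (necessarily new, $G_e$ being bipartite) edge $R_iR_j$ into $G_e(N)$ and running the identical case analysis on an edge $R_aC_b$ of $G_e(M)$, using that $R_iC_b$ survives whenever $M_{i,b}=1$.

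Now suppose $v(N)=v(M)-1$. Because the component count of $G_v$ strictly dropped there must be a genuinely new edge $(j,b)$ in $G_v(N)$, i.e.\ a column $b$ with $M_{j,b}=0$ and $N_{j,b}=1$, forcing $M_{i,b}=1$; fix such a $b$. Since row $i$ is unchanged, $R_iC_b$ and $R_jC_b$ are both edges of $G_e(N)$, so $R_i$ and $R_j$ lie in one component of $G_e(N)$. On the other hand $R_i$ and $R_j$ lie in different components of $G_e(M)$: if they did not, Proposition~\ref{etovclaim} would place $i$ and $j$ in one component of $G_v(M)$, but every new $G_v$-edge at $j$ runs to some $b$ with $M_{i,b}=1$ — a neighbour of $i$, hence already in $i$'s ($=j$'s) component — so no merge could occur and $v(N)\ge v(M)$, a contradiction. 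Finally, running the earlier case analysis on an edge $R_aC_b$ of $G_e(M)$ — this time with no auxiliary edge, since $R_i$ and $R_j$ are already co-component in $G_e(N)$ — shows every edge of $G_e(M)$ joins co-component vertices of $G_e(N)$, so each component of $G_e(N)$ is a union of components of $G_e(M)$ and $e(N)\le e(M)$; and since the component of $G_e(N)$ containing $R_i$ and $R_j$ absorbs at least two components of $G_e(M)$, in fact $e(N)\le e(M)-1$. With $|e(M)-e(N)|\le 1$ already in hand, $e(N)=e(M)-1$.

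The only genuinely delicate point — more bookkeeping than difficulty — is the claim that every newly created $G_v$-edge at vertex $j$ must land inside the component of $i$; this is what simultaneously pins the change in $v$ at exactly one and transfers the merge over to $G_e$ in the last part. All the remaining work is the repeated ``only row $j$ moves, and row $i$ never moves'' case split.
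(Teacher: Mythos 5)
Your proof is correct and follows essentially the same route as the paper's: the locality of the change to row $j$ (so only edges at $j$, resp.\ $R_j$, can change), the involution symmetry to convert one-sided bounds into two-sided ones, the observation that every new $G_v$-edge at $j$ lands in the component of $i$, and \Cref{etovclaim} to transfer the component merge from $G_v$ to $G_e$. Your auxiliary-edge device (inserting $(i,j)$, resp.\ $R_iR_j$, and counting components) is just a tidier packaging of the paper's direct merging argument, and you actually spell out the final step $e(N)\le e(M)-1$ more completely than the paper does.
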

\begin{proof} 
    Consider a CNOT operation on a matrix $M$ adding row $i$ to row $j$ for which $v(M)>v(N)$. If $i$ and $j$ were in the same connected component $A$ in $G_v(M)$, then any new edges in $G_v(N)$ will also be between vertices in $A$, contradicting $v(M)>v(N)$. If $i$ and $j$ were in distinct connected components $A$ and $B$ of $G_v(M)$, then the only edges that will be different in $G_v(N)$ will be those between vertices of $A$ and $B$. This will cause $A$ and $B$ to become a single connected component in $G_v(N)$, and leave all other connected components unchanged. Thus, if $v(M)>v(N)$ then $v(M)=v(N)+1$.

    Similarly, consider the case when $e(M)>e(N)$. If $R_i$, $R_j$ are in the same connected component $A$ of $G_e(M)$ then no edges between connected components are added, while if they are in separate connected components $A$ and $B$, then the only new edges are edges connecting a vertex in $A$ to a vertex in $B$, leaving all other connected components unchanged. Thus, if $e(M)>e(N)$ then $e(M)= e(N) +1$.

     Now, notice that applying $\textrm{CNOT}(i, j)$ on $N$ returns the original matrix $M$. In symmetric fashion, $v(N) \leq v(M) + 1$ and $e(N) \leq e(M) + 1$.
    
    Finally, if $v(N) = v(M) - 1$, then there must exist $i, j$ that lie in different connected components of $G_v(M)$ such that $N_{i, j} = 1$. The CNOT gate adds some row $k$ with $M_{k, j} = 1$ to row $i$. Note that $j, k$ lie in the same connected component of $G_v(M)$. After adding row $k$ to row $i$, we know $i, j, k$ lie in the same connected component of $G_v(N)$ and $R_i, C_j$ is connected in $G_e(N)$. Notably, $R_i$ and $C_j$ are in different connected components of $G_e(M)$ by \Cref{etovclaim}, and thus the number of edge-connected components in $N$ is one less than in $M$, as desired. 
\end{proof}

We now use the tools developed in the proof of \Cref{linkdec} to prove that these three categories of CNOT gates are mutually exclusive, allowing us to sum lower bounds on each category to bound the size of the entire circuit.

\begin{lemma}
    \label{lmc}    
    Let $M, N$ be $n \times n$ reversible binary matrices such that $N$ can be obtained from $M$ through one CNOT operation. Then the CNOT operation is either exclusively a link, middle, or cut gate, or none of these.
\end{lemma}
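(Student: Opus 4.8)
The plan is to show that a single CNOT operation $\textrm{CNOT}(k,j)$ taking $M$ to $N$ cannot simultaneously satisfy two of the three defining conditions, by analyzing what the operation does to the triple $(v, e, S)$. By \Cref{linkdec}, $v$ and $e$ each change by at most $1$, and $v(N) = v(M)-1$ forces $e(N) = e(M)-1$; by the symmetry argument in that proof (applying $\textrm{CNOT}(k,j)$ to $N$ recovers $M$), the reverse implication $e(N) = e(M)-1 \Rightarrow v(N) = v(M)-1$ need not hold, but we do get that $v$ and $e$ can only move in the ``connecting'' direction together in the following weak sense. The first thing I would do is isolate the three mutually exclusive cases for how the operation changes $v$: either $v(N) = v(M)-1$ (link), or $v(N) = v(M)+1$, or $v(N) = v(M)$; and similarly for $e$.

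First I would rule out that a gate is simultaneously a link gate and a cut gate. A link gate has $v(N) = v(M)-1$, which by \Cref{linkdec} gives $e(N) = e(M)-1$, so $e(M) > e(N)$, contradicting the cut-gate condition $e(M) < e(N)$. So link and cut are incompatible.

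Next I would rule out that a link gate or a cut gate is also a middle gate. The key observation is that a middle gate changes $S(M)$, the set of rivers — and I claim a gate that changes $S$ leaves both $v$ and $e$ unchanged, i.e. it is neither a link nor a cut gate. The cleanest route: suppose $\textrm{CNOT}(k,j)$ is a middle gate. Since $N$ differs from $M$ only in row $j$, with $N_{j,\ell} = M_{j,\ell} \oplus M_{k,\ell}$, the set of $1$-entries changes only in row $j$. A river of $M$ that is not a river of $N$ (or vice versa) must use a position $(j,\ell)$ where row $j$ changed, so there is some column $\ell$ with $M_{j,\ell} \ne N_{j,\ell}$; because $N_{j,\ell} = M_{j,\ell}\oplus M_{k,\ell}$, this forces $M_{k,\ell}=1$. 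I would then argue that in fact $S(M)\ne S(N)$ forces both $M$ and $N$ to still have full support structure linking row $k$, row $j$, and column $\ell$, so that the edges $(k,j)$ in $G_v$ and $R_j C_\ell R_k$-type paths in $G_e$ are present both before and after — hence no connected component of $G_v$ or $G_e$ is created or destroyed, giving $v(M)=v(N)$ and $e(M)=e(N)$. I expect the technically fiddly part here to be the bookkeeping on $G_e$: one must check that every column $C_\ell$ and row $R_i$ that loses an edge to $R_j$ in passing from $M$ to $N$ still remains in the same component via row $k$, using that $M_{k,\ell}=1$ whenever $M_{j,\ell}$ was toggled off, together with the existence of the alternate river to supply the needed connectivity. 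This case — showing a middle gate is inert on both connectivity graphs — is the main obstacle; the link/cut exclusion is immediate from \Cref{linkdec}.

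Finally I would assemble the three pairwise exclusions: link $\cap$ cut $= \emptyset$ from \Cref{linkdec}, link $\cap$ middle $=\emptyset$ and cut $\cap$ middle $=\emptyset$ from the middle-gate inertness claim. Since these exhaust all pairs, any CNOT operation lies in at most one category, which is exactly the statement of \Cref{lmc}.
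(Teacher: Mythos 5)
Your link-versus-cut exclusion is exactly the paper's argument and is fine. The problem is the other two exclusions, which are the substance of the lemma: you reduce them to the claim that a middle gate leaves both $v$ and $e$ unchanged, and then you do not prove that claim --- you write ``I would then argue that \dots the edges are present both before and after'' and explicitly flag the $G_e$ bookkeeping as an unresolved obstacle. That bookkeeping is not routine. The CNOT replaces all of row $j$, so many entries $(j,\ell)$ can be toggled off simultaneously, each deleting an edge $R_jC_\ell$ from $G_e$ and possibly an edge $(j,\ell)$ from $G_v$; you would need to show that after all these deletions $R_j$ itself is still attached to the component of $R_k$, and your asserted edge $(k,j)$ in $G_v(M)$ (i.e.\ $M_{k,j}=1$ or $M_{j,k}=1$) does not follow from anything you have established --- the existence of a flipped river only gives you $M_{k,\sigma(k)}=M_{k,\sigma(j)}=1$ and $M_{m,\sigma(m)}=1$ for $m\neq j$. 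So as written the proof has a genuine gap precisely where the lemma is nontrivial.

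It is also worth noting that you have picked the harder direction of attack. The paper proves the contrapositives: if the gate \emph{is} a link gate, then $i$ and $j$ lie in different components of $G_v(M)$, so $M$ is block-diagonal $\bigl[\begin{smallmatrix}\mathbf{A}&\mathbf{0}\\ \mathbf{0}&\mathbf{B}\end{smallmatrix}\bigr]$ up to relabeling, every river must fix the block partition ($\sigma(A)=A$), and the new entries created in the off-diagonal block can therefore never lie on a river, so $S(M)=S(N)$; an analogous block decomposition of $N$ coming from the components of $G_e(N)$ handles the cut case. This makes the river-invariance essentially immediate and sidesteps all of the component-by-component tracking your route requires. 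Your stronger intermediate claim (middle gates preserve $v$ and $e$ exactly) is in fact true, but the clean way to see it is as a corollary of the paper's two contrapositive arguments plus reversal symmetry --- trying to establish it directly, as you propose, is where the difficulty lives. I would either carry out the deferred bookkeeping in full (including the case where row $j$ of $N$ loses its entries in columns $\sigma(i)$ and $\sigma(j)$) or switch to the block-structure argument.
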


\begin{proof}
    By \Cref{linkdec}, if the CNOT operation is a link gate then $e(M) > e(N)$, so a link gate cannot be a cut gate.

    As discussed in the proof of \Cref{linkdec}, link gates operate on rows that are in separate connected components of $G_v(M)$. Let the control row of the CNOT gate lie in connected component $A \subseteq [n]$ of $G_v(M)$. Without loss of generality, let $A = [a]$ and define $B = [n] \setminus A$. 
    Then, $M = \begin{bmatrix}
    \textbf{A} & \textbf{0} \\
    \textbf{0} & \textbf{B}
    \end{bmatrix}$
    where \textbf{A} is an $a \times a$ matrix and \textbf{B} is an $(n-a) \times (n-a)$ matrix. Any $\sigma \in S(M)$ satisfies $\sigma(A) = A$. When we add a row from \textbf{A} to a row in \textbf{B}, the upper-right box of zeros does not change. Thus, for any $\sigma \in S(N)$, we have $\sigma(A) \subseteq A$, so $\sigma(A) = A$. Therefore the set of rivers is unchanged as no river passes through the newly created bottom left entries, so a link gate cannot be a middle gate.

    Finally, we show that cuts and middles are also disjoint. In this case, we can assume up to permutation of the rows and columns that $N = 
    \begin{bmatrix}
    \textbf{A} & \textbf{0} \\
    \textbf{0} & \textbf{B}
    \end{bmatrix}$
    where $\textbf{A}, \textbf{B}$ are not necessarily square. However, if $\textbf{A}, \textbf{B}$ are not square, then $N$ is not reversible, giving contradiction.
    Suppose the CNOT gate adds a row of \textbf{A} to a row of \textbf{B}. Let the set of columns of $M$ in \textbf{A} be denoted as $C_A$, and let the set of rows of $M$ in \textbf{A} be denoted as $R_A$. Thus if $\sigma \in S(N)$, then $\sigma(R_A) = C_A$. Furthermore, for any $\sigma \in S(M)$, we have $\sigma(R_A) \subseteq C_A$ so $\sigma(R_A) = C_A$. Thus the set of rivers in $M$ and $N$ are equal. This finishes the final case.
\end{proof}

Finally, we extend \Cref{linkdec} to yield a simple but powerful lower bound on the number of link and cut gates in a given circuit.

\begin{lemma}
    \label{linksandcuts}
    Any CNOT decomposition of $n \times n$ matrix $M$ must contain at least $n-v(M)$ link gates and at least $e(M)-v(M)$ cut gates.
\end{lemma}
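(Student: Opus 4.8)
The plan is to track the two quantities $v(\cdot)$ and $e(\cdot)$ along a CNOT synthesis of $M$ and argue by telescoping. Fix a CNOT synthesis $M_1 = \mathbf{I}, M_2, \ldots, M_{k+1} = M$. First I would pin down the endpoints: $G_v(\mathbf{I})$ has no edges, and $G_e(\mathbf{I})$ is a disjoint union of the $n$ edges $R_iC_i$, so $v(\mathbf{I}) = e(\mathbf{I}) = n$. By \Cref{linkdec}, each step of the synthesis changes $v$ by at most $1$ and changes $e$ by at most $1$; and by definition the link gates are exactly the steps at which $v$ strictly decreases, while the cut gates are exactly the steps at which $e$ strictly increases.

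For the link bound, write $n - v(M) = \sum_{t=1}^{k}\bigl(v(M_t) - v(M_{t+1})\bigr)$, a telescoping sum in which every summand lies in $\{-1,0,1\}$. The $+1$ summands are precisely the link gates, so if $L$ is the number of link gates and $U$ the number of steps at which $v$ increases, then $L - U = n - v(M)$, giving $L \ge n - v(M)$.

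For the cut bound, similarly $e(M) - n = \sum_{t=1}^{k}\bigl(e(M_{t+1}) - e(M_t)\bigr)$, so if $C$ denotes the number of cut gates and $D$ the number of steps at which $e$ strictly decreases, then $C - D = e(M) - n$. The key input is the last clause of \Cref{linkdec}: every link gate forces $e$ to decrease as well, so $D \ge L \ge n - v(M)$ by the previous paragraph. Substituting, $C = e(M) - n + D \ge e(M) - n + (n - v(M)) = e(M) - v(M)$, as claimed.

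The only substantive ingredient beyond bookkeeping is the implication ``$v(N)=v(M)-1 \Rightarrow e(N)=e(M)-1$'' from \Cref{linkdec}, which is what converts the link-gate count into a lower bound on the number of $e$-decreasing steps; everything else is telescoping over the synthesis together with the values of $v$ and $e$ at the identity. I do not anticipate a real obstacle, though one should double-check the sign conventions (link $=$ $v$ decreases, cut $=$ $e$ increases) and confirm the argument never needs $e(M)\le n$ — it does not, since it only ever lower-bounds $D$.
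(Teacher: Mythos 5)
Your proof is correct and follows essentially the same route as the paper: both arguments use $v(\mathbf{I})=e(\mathbf{I})=n$, the unit-step bounds from \Cref{linkdec} to get the link count, and the fact that every link gate decreases $e$ by one while every cut gate increases it by one to get the cut count. Your version merely makes the telescoping and the inequality $D \ge L$ explicit, which the paper leaves implicit.
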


\begin{proof}
    Consider any CNOT synthesis $M_1, M_2, \ldots, M_{k+1}$ of $M$. Note that $v(\textbf{I}) = e(\textbf{I}) = n$. The number of vertex-connected components must decrease over the sequence, so \Cref{linkdec} implies that at least $n-v(M)$ link gates are required. 

    From \Cref{linkdec}, link gates decrease the number of edge connected components by exactly one, while cut gates increase them by exactly one. Since there are $n$ initial components and $n-v(M)$ link gates, we need at least $e(M)-v(M)$ cut gates in order to end up at $e(M)$ edge-connected components in the final matrix.
\end{proof}

\section{Lower Bounds on Middle Gates}
\label{middlegates}
In this section, we develop an $O(n^{\omega})$ time algorithm that computes a lower bound on the number of middle gates necessary to synthesize a given circuit. To begin, we prove several lemmas that clarify the effects of middle gates on the rivers of a matrix. 

Given a set of middle CNOT gates, each river can only ``interact" with other rivers in the same equivalence class. 
These equivalence classes are determined by the set of middle CNOT gates. When it is clear that the rivers of the $M$ cannot be obtained from $\textbf{I}$ through these interactions, the given set of middle CNOT gates is insufficient to synthesize $M$. In the following subsections, we make this idea rigorous.

\subsection{Equivalence Classes of Rivers}

We now show that a singular middle CNOT gate only allows ``interactions" between certain pairs of rivers, each of which differs by a fixed transposition determined by the control and target of the CNOT gate.

\begin{lemma} 
    \label{creationannihilation}
    Let $M, N$ be $n \times n$ reversible binary matrices such that $N$ can be obtained from $M$ by adding row $i$ to row $j$. Consider pairing each permutation $\sigma$ to $T_{ij} \circ \sigma$. Then $|\{\sigma, T_{ij} \circ \sigma\} \cap S(M)| \equiv |\{\sigma, T_{ij} \circ \sigma\} \cap S(N)| \pmod 2$.
\end{lemma}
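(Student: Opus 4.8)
The plan is to reduce the claim to a short parity computation and then exploit a symmetry. Write $\tau = T_{ij}\circ\sigma$; concretely, $\tau$ is $\sigma$ with the values assigned to positions $i$ and $j$ interchanged, so $\tau(i)=\sigma(j)$, $\tau(j)=\sigma(i)$, and $\tau(k)=\sigma(k)$ for $k\notin\{i,j\}$. Since the operation is a genuine CNOT we have $i\neq j$, hence $\tau\neq\sigma$ and $\{\sigma,\tau\}$ is a $2$-element set. Because each quantity $|\{\sigma,\tau\}\cap S(M)|$ and $|\{\sigma,\tau\}\cap S(N)|$ is a sum of two indicator values, it is enough to understand $\mathbf{1}_{\rho\in S(M)}+\mathbf{1}_{\rho\in S(N)}\pmod 2$ for a single permutation $\rho$ and then add the contributions of $\rho=\sigma$ and $\rho=\tau$.

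First I would carry out the single-permutation computation. The matrix $N$ agrees with $M$ in every row except row $j$, where $N_{j,\ell}=M_{j,\ell}\oplus M_{i,\ell}$. Setting $P_\rho=\prod_{k\neq j}M_{k,\rho(k)}\in\{0,1\}$ for the river indicator over the unchanged rows, we have $\mathbf{1}_{\rho\in S(M)}=P_\rho\,M_{j,\rho(j)}$ and $\mathbf{1}_{\rho\in S(N)}=P_\rho\,(M_{j,\rho(j)}\oplus M_{i,\rho(j)})$. Using $a+(a\oplus b)\equiv b\pmod 2$ for $a,b\in\{0,1\}$, these combine to
\[
\mathbf{1}_{\rho\in S(M)}+\mathbf{1}_{\rho\in S(N)}\equiv M_{i,\rho(j)}\prod_{k\neq j}M_{k,\rho(k)}\pmod 2 ,
\]
i.e.\ modulo $2$ this count is the indicator that $\rho$ is a river of the ``hybrid'' matrix obtained from $M$ by overwriting row $j$ with a copy of row $i$.

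The heart of the argument is to sum this identity over $\rho\in\{\sigma,\tau\}$ and notice the right-hand side is invariant under $\sigma\leftrightarrow\tau$. Since $i\neq j$, we may pull the $k=i$ factor out of the product: the $\sigma$-contribution is $M_{i,\sigma(j)}M_{i,\sigma(i)}\prod_{k\neq i,j}M_{k,\sigma(k)}$, and, using $\tau(i)=\sigma(j)$, $\tau(j)=\sigma(i)$, and $\tau(k)=\sigma(k)$ for $k\neq i,j$, the $\tau$-contribution is $M_{i,\sigma(i)}M_{i,\sigma(j)}\prod_{k\neq i,j}M_{k,\sigma(k)}$. These are equal, so their sum is $\equiv 0\pmod 2$; equivalently, the hybrid matrix has two equal rows ($i$ and $j$), so its river set is closed under the fixed-point-free involution that interchanges the columns assigned to rows $i$ and $j$, hence meets $\{\sigma,\tau\}$ in an even number of points. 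Feeding this back into the displayed identity gives $|\{\sigma,\tau\}\cap S(M)|\equiv|\{\sigma,\tau\}\cap S(N)|\pmod 2$, as required. (Summing over all such pairs recovers the standard fact that an $\mathbb{F}_2$ row operation preserves the determinant, hence $|S(M)|\equiv|S(N)|$; the lemma is the pair-by-pair refinement.)

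The step I expect to require the most care is setting up the reduction correctly: one must recognize that $\mathbf{1}_{\rho\in S(M)}+\mathbf{1}_{\rho\in S(N)}\pmod 2$ collapses to a single ``hybrid river'' indicator, and one must take the transposition on the correct side — it should interchange the columns assigned to rows $i$ and $j$ (equivalently, swap rows $i$ and $j$ of the river), which is precisely what gives the hybrid matrix a repeated row. Once that is in place the remaining work is a one-line $\mathbb{F}_2$ identity with no case analysis, since isolating the $k=i$ factor and keeping $k=j$ out of the product treats all coincidence patterns among $i$, $j$, $\sigma(i)$, $\sigma(j)$ uniformly.
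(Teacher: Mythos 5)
Your proof is correct. It establishes the same fact as the paper but by a genuinely different route: the paper argues by direct casework on the four possible values of $\bigl(M_{i,\sigma(i)},M_{i,\sigma(j)}\bigr)$ (plus a sub-case on whether the remaining rows support a river), checking in each case that the two intersections have the same parity, whereas you first prove the single-permutation identity
\[
\mathbf{1}_{\rho\in S(M)}+\mathbf{1}_{\rho\in S(N)}\equiv M_{i,\rho(j)}\prod_{k\neq j}M_{k,\rho(k)}\pmod 2
\]
via the $\mathbb{F}_2$ cancellation $a+(a\oplus b)\equiv b$, and then observe that the right-hand side is symmetric under $\sigma\leftrightarrow T_{ij}\circ\sigma$ because it only sees row $i$ at the two columns $\sigma(i),\sigma(j)$ (equivalently, the ``hybrid'' matrix has rows $i$ and $j$ equal). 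This buys a uniform, case-free argument and a conceptual explanation of why the pairing by $T_{ij}$ is the right one — it is exactly the involution fixing the river set of the hybrid matrix — at the cost of slightly more setup; the paper's casework is more pedestrian but requires no auxiliary matrix. Both correctly use that only row $j$ changes under $\mathrm{CNOT}(i,j)$ and that $i\neq j$, and your reading of $T_{ij}\circ\sigma$ (swap the values at positions $i$ and $j$) matches the one the paper's own casework implicitly uses.
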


\begin{figure*}[!hbtp]
    \centering
    \scalebox{1.3}{
    \begin{subfigure}{.3\textwidth}
        \renewcommand{\arraystretch}{1.5}
        \centering
        \[
        \begin{matrix}
            1 & 0 \\
            0 & 1
        \end{matrix}
        \quad
        \xrightarrow{\hspace{2em}}
        \quad
        \begin{matrix}
            1 & 0 \\
            1 & 1
        \end{matrix}
        \]
    \end{subfigure}%
    }
    \scalebox{1.3}{
    \begin{subfigure}{.3\textwidth}
        \renewcommand{\arraystretch}{1.5}
        \centering
        \[
        \begin{matrix}
            1 & 1 \\
            0 & 1
        \end{matrix}
        \quad
        \xrightarrow{\hspace{2em}}
        \quad
        \begin{matrix}
            1 & 1 \\
            1 & 0
        \end{matrix}
        \]
    \end{subfigure}
    }
    \caption{Examples of the effects of $\textrm{CNOT}(i, j)$ for rows $i, j$ and columns $\sigma(i), \sigma(j)$.}
    \label{effectcnot}
\end{figure*}

\begin{proof}
    We begin with casework on the values of $M_{i, \sigma(i)}$ and $M_{i, \sigma(j)}$, with examples in \Cref{effectcnot}. If they are both $0$, then $\{\sigma, T_{ij} \circ \sigma\} \cap S(M) = \{\sigma, T_{ij} \circ \sigma\} \cap S(N) = \emptyset$ as desired. If exactly one of them is $1$, it can be checked that $\{\sigma, T_{ij} \circ \sigma\} \cap S(M) = \{\sigma, T_{ij} \circ \sigma\} \cap S(N)$ as desired. We now consider the case when $M_{i, \sigma(i)} = M_{i, \sigma(j)} = 1$.
    In this case, if it is not true that $M_{k, \sigma(k)} = 1$ for all $k \neq i, j$ then $\{\sigma, T_{ij} \circ \sigma\} \cap S(M) = \{\sigma, T_{ij} \circ \sigma\} \cap S(N) = \emptyset$ as desired. Otherwise, observe that $\sigma \in M$ if and only if $\sigma \notin N$ and $T_{ij} \circ \sigma \in M$ if and only if $T_{ij} \circ \sigma \notin N$. Thus, $(\{\sigma, T_{ij} \circ \sigma\} \cap S(M)) \sqcup (\{\sigma, T_{ij} \circ \sigma\} \cap S(N)) = \{\sigma, T_{ij} \circ \sigma\}$, completing the final case.
\end{proof}

\Cref{creationannihilation} demonstrates that when there is only one CNOT gate involved, the parity of the number of rivers in sets of the form $\{\sigma, T_{ij} \circ \sigma\}$ cannot change. We extend this idea to the case of multiple CNOT gates by allowing multiple transpositions.

\begin{definition}
    Let $S \subseteq T$ be a fixed set of transpositions. We say that $\sigma_1 \equiv_{S} \sigma_2$ if and only if $\sigma_2$ can be obtained from $\sigma_1$ through a sequence of transpositions in $S$. It is easy to see that $\equiv_S$ is an equivalence relation on $S_n$ as the reflexive, transitive, and symmetric properties hold. We let $S_n/S$ denote the set of equivalence classes under $\equiv_S$.
\end{definition}
 
\begin{theorem}
    \label{eqparity}
    Let $M, N$ be $n \times n$ reversible binary matrices such that $N$ can be obtained from $M$ by adding row $i$ to row $j$. Then, for any $S \subseteq T$ with $T_{ij} \in S$ and for any equivalence class $K \in S_n/S$, we have $|K \cap S(M)| \equiv |K \cap S(N)| \pmod 2$.
\end{theorem}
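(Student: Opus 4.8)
The plan is to reduce Theorem~\ref{eqparity} to \Cref{creationannihilation} by exhibiting $K$ as a disjoint union of the two-element sets $\{\sigma, T_{ij}\circ\sigma\}$ to which that lemma applies. The crucial point is that the hypothesis $T_{ij}\in S$ forces each equivalence class $K\in S_n/S$ to be closed under the map $\sigma\mapsto T_{ij}\circ\sigma$: if $\sigma\in K$, then $T_{ij}\circ\sigma$ differs from $\sigma$ by the transposition $T_{ij}\in S$, hence $T_{ij}\circ\sigma\equiv_S\sigma$ and $T_{ij}\circ\sigma\in K$.

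Next I would note that $\varphi:\sigma\mapsto T_{ij}\circ\sigma$ is an involution on $S_n$ (since $T_{ij}^2=e$) with no fixed points (since a transposition is never the identity, $T_{ij}\circ\sigma\neq\sigma$ for every $\sigma$). Therefore the orbits of $\varphi$ restricted to $K$ are exactly the two-element sets $\{\sigma, T_{ij}\circ\sigma\}$, and these partition $K$. Choosing one representative from each orbit gives a set $R\subseteq K$ with $K=\bigsqcup_{\sigma\in R}\{\sigma, T_{ij}\circ\sigma\}$, and consequently
\[
|K\cap S(M)| \;=\; \sum_{\sigma\in R}\bigl|\{\sigma, T_{ij}\circ\sigma\}\cap S(M)\bigr|,
\qquad
|K\cap S(N)| \;=\; \sum_{\sigma\in R}\bigl|\{\sigma, T_{ij}\circ\sigma\}\cap S(N)\bigr|.
\]

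Finally I would apply \Cref{creationannihilation} term by term: since $N$ is obtained from $M$ by adding row $i$ to row $j$, each summand satisfies $\bigl|\{\sigma, T_{ij}\circ\sigma\}\cap S(M)\bigr| \equiv \bigl|\{\sigma, T_{ij}\circ\sigma\}\cap S(N)\bigr| \pmod 2$. Summing these congruences over $\sigma\in R$ yields $|K\cap S(M)| \equiv |K\cap S(N)| \pmod 2$, which is the claim. There is no real obstacle here beyond the bookkeeping of checking that the pairing genuinely partitions $K$; the only subtlety worth stating explicitly is the use of $T_{ij}\in S$ to guarantee $K$ is $\varphi$-invariant, since without it the sets $\{\sigma, T_{ij}\circ\sigma\}$ could straddle two different equivalence classes and the reduction would fail.
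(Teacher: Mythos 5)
Your proof is correct and takes essentially the same approach as the paper: the paper also partitions $K$ into pairs $(\sigma, T_{ij}\circ\sigma)$ using the hypothesis $T_{ij}\in S$ and then sums the congruences from \Cref{creationannihilation}. You simply spell out the involution/fixed-point-free bookkeeping that the paper leaves implicit.
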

\begin{proof}
    Since $T_{ij} \in S$, we can partition all the rivers in $K$ into pairs $(\sigma, T_{ij} \circ \sigma)$ without including elements outside of $K$. By \Cref{creationannihilation},
\begin{align*}|K \cap S(M)| &= \sum |\{\sigma, T_{ij} \circ \sigma\} \cap S(M)| \\ &\equiv \sum |\{\sigma, T_{ij} \circ \sigma\} \cap S(N)| \\ &= |K \cap S(N)| \pmod 2.\end{align*}
\end{proof}

In preparation for a proof in the next subsection, we introduce the labeling of an equivalence class.

\begin{definition}
    Let $S \subseteq T$ and $K \in S_n/S$. Let $a_i = \min_{\sigma \in K} \sigma^{-1}(i)$. We define the \emph{labeling} $\mathcal{L}(K)$ of $K$ as  $(a_1, a_2, \ldots, a_n) \in [n]^n$. Additionally, when the context for $S$ is clear, we let $\mathcal{L}(\sigma)$ denote the labeling of the equivalence class of $\sigma$.
\end{definition}

In order to show that no two equivalence classes share a labeling, we prove the following lemma.

\begin{lemma}
    \label{treeconstruct}
    Suppose that there is a tree $T$ whose vertices are $v_1, v_2, \ldots, v_m$. We begin by choosing a set of initial values $p_0(v_i)$, and assigning $p(v_i) = p_0(v_i)$ for each vertex $v_i$. We are allowed to perform operations which exchange the values of $p(v_i)$ and $p(v_j)$ for adjacent vertices $v_i, v_j$. Then for any permutation $\sigma \in S_m$, there is a sequence of operations resulting in $p(v_i) = p_0(v_{\sigma(i)})$ for all $i \in [m]$.
\end{lemma}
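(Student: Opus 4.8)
The plan is to prove this by induction on the number of vertices $m$ of the tree $T$. The claim is that the group of permutations of the values realizable by adjacent-swap operations is the full symmetric group $S_m$; since adjacent transpositions generate $S_m$, it suffices to show each adjacent transposition (swap the values at two adjacent vertices, leaving all others fixed) is realizable as a composition of the allowed operations. Equivalently, I would show that for a leaf $v$ of $T$ and its unique neighbor $w$, one can transpose the values at $v$ and $w$ while returning every other vertex's value to where it started; combined with induction on the smaller tree $T - v$, this gives all of $S_m$.

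First I would handle the base case $m = 1$ (trivial, only the identity permutation) and $m=2$ (a single edge, one allowed swap, which already realizes both elements of $S_2$). For the inductive step, pick a leaf $v$ with neighbor $w$; let $T' = T - v$, a tree on $m - 1$ vertices. By the inductive hypothesis, any permutation of the values among $v_1, \ldots, v_m$ restricted to $V(T')$ is achievable using only edges of $T'$, provided we do not care what ends up at $v$. I would then argue as follows: given a target permutation $\sigma \in S_m$, first move the value that should end up at $v$ into $w$ using operations in $T'$ (reachable since $T'$ is connected), then swap $v$ and $w$ across the edge $vw$, placing the correct value at $v$; at this point the remaining $m-1$ values are distributed among the vertices of $T'$ in some configuration, and by the inductive hypothesis we can permute them within $T'$ into the desired arrangement without ever touching $v$ again. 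Writing out the bookkeeping — tracking exactly which value sits where after each phase — is the one place that needs care, but it is routine once the three-phase structure (gather at $w$, swap across $vw$, fix up $T'$) is fixed.

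The main obstacle, or rather the main subtlety, is making sure the final clean-up phase in $T'$ can achieve \emph{exactly} the required permutation of the remaining values, including returning to their original positions all the vertices that were disturbed while gathering the $v$-bound value at $w$. This is handled cleanly by stating the inductive hypothesis in its full strength — \emph{every} permutation in $S_{m-1}$ of the $T'$-values is realizable — so that whatever scrambled configuration results after the first two phases, the difference between it and the target is some element of $S_{m-1}$ and hence realizable. An alternative, perhaps slicker, framing I might use instead: show directly that the set of realizable permutations is a subgroup of $S_m$ containing a transposition $(v\,w)$ for the leaf edge, and that is transitive (since $T$ is connected, the value at any vertex can be routed to any other), and invoke the fact that a transitive subgroup of $S_m$ containing a transposition across an edge of a tree — together with the inductively obtained $S_{m-1}$ acting on $T'$ — generates all of $S_m$; but the explicit three-phase construction is more self-contained and I would present that.
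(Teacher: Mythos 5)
Your proposal is correct and follows essentially the same strategy as the paper's proof: pick a leaf, route the value destined for it along the tree (the paper phrases this as "swapping $\sigma(i)$ up the tree" toward the leaf, which is your gather-at-$w$-then-swap-across-$vw$ step), delete the leaf, and recurse on the smaller tree. Your write-up is somewhat more explicit about why the clean-up phase on $T'$ succeeds (by stating the inductive hypothesis for arbitrary initial configurations), but the underlying argument is the same.
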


\begin{figure*}[!hbtp]
\begin{tikzpicture}[x=0.6cm,y=0.6cm]
 
\draw (-0.4, 0.8) -- (1,1) -- (2,0);
\filldraw [black] (-0.4, 0.8) circle (2pt);
\filldraw [black] (1,1) circle (2pt);
\filldraw [black] (1.2,2.4) circle (2pt) node[anchor = south]{$v_i$};
\filldraw [black] (-1.4,-0.2) circle (2pt);
\filldraw [black] (2,0) circle (2pt);
\filldraw [black] (-0.65, -1.4) circle (2pt);
\filldraw [black] (2.95,-2.6) circle (2pt) node[anchor=north]{$\sigma(i)$};
\filldraw [black] (2.75,-1.2) circle (2pt);
\filldraw [black] (-2.15, -1.4) circle (2pt);
\filldraw [black] (1.25,-1.2) circle (2pt);
\draw (1,1) -- (1.2,2.4);
\draw (-0.4, 0.8) -- (-1.4,-0.2);
\draw (2,0) -- (2.75,-1.2);
\draw (2,0) -- (1.25,-1.2);
\draw (-2.15, -1.4) -- (-1.4,-0.2);
\draw (-0.65, -1.4) -- (-1.4,-0.2);
\draw (2.95,-2.6) -- (2.75,-1.2);

\draw[->] (3.5,0) -- (4.5,0);
\draw[<->,thick] (3.2,-2.55) arc[start angle=-24, end angle=36,radius=1.4142];

\draw (6.6, 0.8) -- (8,1) -- (9,0);
\filldraw [black] (6.6, 0.8) circle (2pt);
\filldraw [black] (8,1) circle (2pt);
\filldraw [black] (8.2,2.4) circle (2pt) node[anchor = south]{$v_i$};
\filldraw [black] (5.6,-0.2) circle (2pt);
\filldraw [black] (9,0) circle (2pt);
\filldraw [black] (6.35, -1.4) circle (2pt);
\filldraw [black] (9.75,-1.2) circle (2pt) node[anchor=west]{$\sigma(i)$};
\filldraw [black] (9.95,-2.6) circle (2pt);
\filldraw [black] (4.85, -1.4) circle (2pt);
\filldraw [black] (8.25,-1.2) circle (2pt);
\draw (8,1) -- (8.2,2.4);
\draw (6.6, 0.8) -- (5.6,-0.2);
\draw (9,0) -- (9.75,-1.2);
\draw (9,0) -- (8.25,-1.2);
\draw (4.85, -1.4) -- (5.6,-0.2);
\draw (6.35, -1.4) -- (5.6,-0.2);
\draw (9.95,-2.6) -- (9.75,-1.2);

\draw[->] (10.5,0) -- (11.5,0);
\draw[->] (14.5,0) -- (15.5,0);
\draw[<->,thick] (9.95,-1.05) arc[start angle=0, end angle=60,radius=1.4142];

\filldraw [black] (12.6,0) circle (0.75pt);
\filldraw [black] (13,0) circle (0.75pt);
\filldraw [black] (13.4,0) circle (0.75pt);

\draw (17.6, 0.8) -- (19,1) -- (20,0);
\filldraw [black] (17.6, 0.8) circle (2pt);
\filldraw [black] (20.95,-2.6) circle (2pt);
\filldraw [black] (19.2,2.4) circle (2pt) node[anchor = south]{$v_i$};
\filldraw [black] (16.6,-0.2) circle (2pt);
\filldraw [black] (20,0) circle (2pt);
\filldraw [black] (17.35, -1.4) circle (2pt);
\filldraw [black] (19,1) circle (2pt);
\filldraw [black] (19.2,2.4) circle (2pt) node[anchor=west]{$\sigma(i)$};
\filldraw [black] (20.75,-1.2) circle (2pt);
\filldraw [black] (15.85, -1.4) circle (2pt);
\filldraw [black] (19.25,-1.2) circle (2pt);
\draw (19,1) -- (19.2,2.4);
\draw (17.6, 0.8) -- (16.6,-0.2);
\draw (20,0) -- (20.75,-1.2);
\draw (20,0) -- (19.25,-1.2);
\draw (15.85, -1.4) -- (16.6,-0.2);
\draw (17.35, -1.4) -- (16.6,-0.2);
\draw (20.95,-2.6) -- (20.75,-1.2);

\end{tikzpicture}
\caption{An illustration of \Cref{treeconstruct}.}
\label{treelem}
\end{figure*}

\begin{proof}
    Without loss of generality, we assume that $p_0(v_i) = i$ for $i \in [m]$. Consider the following algorithm: pick a leaf node $v_i$. We want to ensure that $p(v_i) = \sigma(i)$. We view $T$ as a rooted tree with $v_i$ as its root node and perform operations swapping $\sigma(i)$ up the tree until $p(v_i) = \sigma(i)$ (see \Cref{treelem}). Now, we can remove $v_i$ from $T$. We repeat this process until we have assigned $\sigma(i)$ to every $v_i$. 
\end{proof}

\begin{theorem}
\label{labeling}
    Let $S \subseteq T$. The function $\mathcal{L}: S_n/S \to [n]^n$ is injective.
\end{theorem}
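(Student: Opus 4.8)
The plan is to show that the labeling of an equivalence class $K$ records exactly the combinatorial data that determines $K$. First I would form the graph $G$ on vertex set $[n]$ with an edge $\{u,v\}$ whenever $T_{uv}\in S$, and let $C_1,\dots,C_r$ be its connected components with minima $m_\ell=\min C_\ell$. Since $T_{uv}\in S$ forces $u$ and $v$ into a common component of $G$, each single $\equiv_S$-move merely permutes, within one component $C_\ell$, which values occupy the positions indexed by $C_\ell$; hence for every $\sigma\in K$ the set $D_\ell:=\{\sigma(c):c\in C_\ell\}$ is the same, and the tuple $(D_1,\dots,D_r)$ is a well-defined invariant of $K$, partitioning $[n]$ into blocks of sizes $|C_1|,\dots,|C_r|$.

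The key input is \Cref{treeconstruct}. I would apply it to a spanning tree of each component $C_\ell$, taking the tree's ``values'' to be the entries of the one-line notation lying in the positions $C_\ell$ and an allowed exchange to be the $\equiv_S$-move given by the corresponding transposition in $S$; this shows that, without leaving $K$, one can realize every bijection of $C_\ell$ onto $D_\ell$, independently across $\ell$. Two consequences follow. First, any two permutations with the same tuple $(D_1,\dots,D_r)$ can be transformed into one another block by block, so they are $\equiv_S$-equivalent; thus the assignment $K\mapsto(D_1,\dots,D_r)$ is injective on $S_n/S$. Second, writing $\ell(i)$ for the index with $i\in D_{\ell(i)}$, the value $i$ always occupies a position in $C_{\ell(i)}$, so $\sigma^{-1}(i)\ge m_{\ell(i)}$ for all $\sigma\in K$, while \Cref{treeconstruct} furnishes a representative placing $i$ in position $m_{\ell(i)}$; hence $a_i=\min_{\sigma\in K}\sigma^{-1}(i)=m_{\ell(i)}$.

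To finish, note the minima $m_1,\dots,m_r$ are pairwise distinct (the $C_\ell$ are disjoint and $m_\ell\in C_\ell$), so the labeling $\mathcal{L}(K)=(a_1,\dots,a_n)$ recovers each block via $D_\ell=\{i:a_i=m_\ell\}$, hence the whole tuple $(D_1,\dots,D_r)$, hence the class $K$ by the first consequence; this gives injectivity of $\mathcal{L}$. I expect the only delicate step to be the bookkeeping in invoking \Cref{treeconstruct}: one must pin down the composition convention so that ``swapping the values on two adjacent tree vertices'' is genuinely a move along an edge of $G$ (equivalently, a transposition of $S$), and verify that carrying out such moves inside $C_\ell$ leaves the other blocks untouched; once that dictionary is fixed, the reconstruction argument above is routine.
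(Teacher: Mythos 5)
Your proof is correct and follows essentially the same route as the paper's: both identify the connected components of the transposition graph $G$, invoke \Cref{treeconstruct} on spanning trees of those components to show that the block data $(D_1,\dots,D_r)$ (the paper's sets $A_k$ together with $\sigma_0^{-1}(A_k)$) is a \emph{complete} invariant of the class, and then observe that the labeling recovers this data because the component minima are distinct. Your write-up is somewhat more explicit than the paper's about why $a_i$ equals the minimum of the component containing the position of $i$, but the underlying argument is the same.
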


\begin{proof}
    Let a labeling be $L = (a_1, a_2, \ldots, a_n)$, and suppose $\mathcal{L}(K) = L$ for some equivalence class $K \in S_n/S$. By definition, $a_i$ denotes the minimum $j$ such that $\sigma(j) = i$ over all $\sigma \in K$. Let $\sigma_0$ be a fixed element of $K$. If $T_{uv} \in S$, then $a_{\sigma_0(u)} = a_{\sigma_0(v)}$. Define $A_k = \{i | a_i = k \}$ for each $k \in [n]$. Let $G = (V, E)$ be a graph such that $V = [n]$ and $(i, j) \in E$ if and only if $T_{ij} \in S$.

    For each transposition $T_{uv} \in S$, both $u, v$ lie in the same set $\sigma_0^{-1}(A_k)$. Conversely, if $a_{\sigma_0(u)} = a_{\sigma_0(v)}$ then there exists a composition of transpositions in $S$ mapping $u$ to $a_{\sigma_0(u)}$ then to $v$. Thus each nonempty set $\sigma_0^{-1}(A_k)$ is a connected component of $G$.
    
    As a result, any $\sigma \in K$ must satisfy $\sigma(\sigma_0^{-1}(A_k)) = A_k$ for all $k$. In particular, the subgraph of $G$ induced by $\sigma_0^{-1}(A_k)$ is connected and has a spanning tree, so by \Cref{treeconstruct}, all permutations in $S_n$ satisfying this set of conditions must also lie in $K$. Hence $K$ is uniquely determined by $L$ and $S$, since the sets $\sigma_0^{-1}(A_k)$ are determined by $S$ as the connected components of $G$ and the sets $A_k$ are determined by $L$.
\end{proof}

Repeated applications of \Cref{eqparity} show that if $S$ contains the transpositions corresponding to each middle CNOT gate in a CNOT decomposition, then the parity of $|K \cap S(M)|$ is preserved throughout the entire CNOT decomposition. Although this condition is powerful, there are many possible equivalence classes $K$ based on the choice of CNOT gates, making it impractical to check the condition. In order to make the lower bound on middle gates computable in polynomial time, we should eliminate both the dependence on $K$ and $S(M)$.

\subsection{The $c_{\textrm{perfect}}$ Algorithm}

We introduce a consequence of \Cref{eqparity} which more directly allows us to find a lower bound on the number of middle gates without depending on the equivalence classes $K$ in \Cref{eqparity}.

\begin{theorem} 
    \label{abstractsum1}
    Let $M_1, M_2, \ldots, M_{k+1}$ be a CNOT synthesis of $n \times n$ matrix $M$. Let $M' = \textbf{I} + \sum_{\sigma \in S(M)} P_{\sigma}$. Let $G = (V, E)$ a graph such that $V = [n]$ and $(i, j) \in E$ if and only if there is some middle gate involved in the CNOT synthesis adding between rows $i, j$. Then for any connected component $A$ of $G$, let $\textbf{a} = (\textbf{1}_{1 \in A}, \textbf{1}_{2 \in A}, \ldots, \textbf{1}_{n \in A})$. Then, $\textbf{a}M' = \textbf{0}$.
\end{theorem}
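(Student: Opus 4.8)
The plan is to work over $\mathbb{F}_2$ — the identity is false over $\mathbb{Z}$, since $(\textbf{a}M')_j$ would be a sum of nonnegative integers at least one of which is positive when $j\in A$ — and to track how $M'_t := \textbf{I} + \sum_{\sigma\in S(M_t)}P_\sigma$ evolves along the synthesis $\textbf{I} = M_1, M_2, \ldots, M_{k+1} = M$. Since $S(\textbf{I}) = \{e\}$ and $P_e = \textbf{I}$, we begin at $M'_1 = \textbf{0}$, so it suffices to control each increment $\Delta_t := M'_{t+1} + M'_t = \sum_{\sigma\in D_t}P_\sigma$, where $D_t := S(M_t)\,\triangle\,S(M_{t+1})$ is the set of permutations whose river status is toggled by gate $t$.

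First I would pin down $\Delta_t$. If gate $t$ is $\textrm{CNOT}(i,j)$, then $M_{t+1}$ is $M_t$ with row $j$ replaced by (row $j$ $\oplus$ row $i$), and a direct check from the definition of a river shows $\sigma\in D_t$ exactly when $(M_t)_{k,\sigma(k)}=1$ for all $k\neq j$ and $(M_t)_{i,\sigma(j)}=1$; equivalently $D_t = S(\hat M_t)$, where $\hat M_t$ is $M_t$ with row $j$ overwritten by a copy of row $i$. The point of this reformulation is that $\hat M_t$ has two identical rows (rows $i$ and $j$), so a river of $\hat M_t$ stays a river after swapping the columns assigned to rows $i$ and $j$; hence $S(\hat M_t)$ partitions into $2$-element sets $\{\sigma, T_{ij}\circ\sigma\}$ — exactly the pairing of \Cref{creationannihilation}. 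Using $P_{T_{ij}\circ\sigma} = P_{T_{ij}}P_\sigma$, summing one representative per pair gives $\Delta_t = (\textbf{I}+P_{T_{ij}})\,Q_t$ for some $0/1$ matrix $Q_t$. (When gate $t$ is not a middle gate, $S$ is unchanged, $D_t=\emptyset$, and $\Delta_t=\textbf{0}$, which is of the same form.)

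Telescoping the increments over $\mathbb{F}_2$ then gives $M' = M'_{k+1} = \sum_{t=1}^{k}\Delta_t = \sum_t (\textbf{I}+P_{T_{i_t j_t}})\,Q_t$, where only the middle gates contribute and, for each such gate, $(i_t,j_t)$ is by definition an edge of $G$. Finally I left-multiply by $\textbf{a}=\textbf{1}_A$. A one-line computation gives $\textbf{1}_A P_\tau = \textbf{1}_{\tau(A)}$ for any permutation $\tau$, so $\textbf{1}_A(\textbf{I}+P_{T_{i_t j_t}}) = \textbf{1}_A + \textbf{1}_{T_{i_t j_t}(A)}$. Since $A$ is a connected component of $G$ while $(i_t,j_t)$ is an edge of $G$, the vertices $i_t$ and $j_t$ lie in the same component, hence both in $A$ or both outside $A$; either way $T_{i_t j_t}(A) = A$, so this factor vanishes and the whole term dies. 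Therefore $\textbf{a}M' = \sum_t \big(\textbf{1}_A(\textbf{I}+P_{T_{i_t j_t}})\big)Q_t = \textbf{0}$, as claimed. (The same argument applies verbatim with $A$ any union of connected components of $G$.)

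The hard part is the factorization $\Delta_t = (\textbf{I}+P_{T_{ij}})Q_t$ with the transposition appearing on the \emph{left}: the key realization is that the toggled rivers $D_t$ are precisely the rivers of the ``row-$j$-overwritten'' matrix $\hat M_t$, which has two equal rows, so its rivers pair up under the column-swap $\sigma\mapsto T_{ij}\circ\sigma$ — exactly the pairing whose permutation-matrix sum factors with $\textbf{I}+P_{T_{ij}}$ on the left, the side annihilated by $\textbf{1}_A$. Everything after that — the telescoping sum and the observation that middle-gate edges never join distinct components of $G$ — is routine bookkeeping.
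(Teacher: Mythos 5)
Your proof is correct, and it takes a genuinely different route from the paper's. The paper derives \Cref{abstractsum1} by combining \Cref{eqparity} (parity invariance of $|K \cap S(\cdot)|$ for each class $K \in S_n/S$) with the labeling machinery of \Cref{labeling} and \Cref{treeconstruct}, then translating ``each labeling occurs an even number of times in the symmetric difference $S'$'' into the row-sum statement via the counts $a_{ij}$ and $b_{ij}$. You bypass the equivalence-class and labeling apparatus entirely: you telescope $M'_t = \mathbf{I} + \sum_{\sigma \in S(M_t)} P_\sigma$ along the synthesis, identify the toggled set $D_t$ as $S(\hat M_t)$ for the two-equal-rows matrix $\hat M_t$ (a clean repackaging of \Cref{creationannihilation}), extract the factorization $\Delta_t = (\mathbf{I} + P_{T_{ij}})Q_t$, and kill each term with $\mathbf{1}_A$ using $T_{ij}(A) = A$. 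This is shorter, purely linear-algebraic, makes explicit that the identity lives over $\mathbb{F}_2$ (the paper leaves this implicit), and yields the union-of-components generalization for free; what the paper's route buys is the finer per-class parity invariant of \Cref{eqparity} and the labeling structure, which it reuses inside the proof of \Cref{abstractsum1} itself. One notational caution: with the convention $(P_\sigma)_{k,\sigma(k)} = 1$, the column-swap pairing you describe is $\sigma \mapsto \sigma \circ T_{ij}$ and the matrix identity is $P_{\sigma \circ T_{ij}} = P_{T_{ij}} P_\sigma$; you write $T_{ij} \circ \sigma$, which matches the paper's own loose usage in \Cref{creationannihilation}, but the essential point --- that the transposition acts on the \emph{row} index so that the factor $\mathbf{I} + P_{T_{ij}}$ sits on the left where $\mathbf{1}_A$ can annihilate it --- is exactly what your $\hat M_t$ construction guarantees, so the argument stands.
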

\begin{proof}
    First, define $S = \{T_{ij} | (i, j) \in E\}$. For every equivalence class $K \in S_n/S$, by \Cref{eqparity}, we have $|K \cap S(M_1)| \equiv |K \cap S(M_{k+1})| \pmod 2$ since non-middle gates in the sequence do not affect the set of rivers. 
    
    Let $S' = (S(M_{k+1}) \setminus S(M_1)) \cup (S(M_1) \setminus S(M_{k+1}))$ be obtained by either removing or adding $e$ to $S(M_{k+1})$ depending on whether $e \in S(M_{k+1})$. Let $a_{ij}$ denote the number of rivers $\sigma \in S'$ such that the $j$th term of $\mathcal{L}(\sigma)$ is equal to $i$. Since each individual labeling must appear an even number of times in $S'$, $a_{ij}$ is even for all $i, j$. 
    
    Now, for each permutation $\sigma$, note that the $j$th entry of $\sigma^{-1}$, which we denote $u$, represents the position of $\sigma$ in which $j$ appears. Furthermore, the $j$th entry of $\mathcal{L}(\sigma)$, which we denote $v$, represents the minimum position of $j$ over all permutations in the equivalence class of $\sigma$. From the proof of \Cref{labeling}, $v$ is the minimum vertex in the connected component of $u$ in $G$.

    Thus each $a_{ij} \neq 0$ counts exactly the number of rivers $\sigma \in S'$ such that $\sigma^{-1}(j)$ lies in the connected component of $i$ in $G$. Notably, $a_{ij} = 0$ whenever $i$ is not the smallest vertex in its connected component.

    Let us further define $b_{ij}$ to be the number of rivers $\sigma \in S'$ such that $\sigma^{-1}(j) = i$. Then

    $$a_{ij} = \begin{cases}\sum \limits_{k \in A} b_{kj} & i = \min A, \\ 0 & \textrm{otherwise,}\end{cases}$$

    \noindent where $A$ is the connected component of $i$ in $G$.

    Now, since every $a_{ij}$ is even, for any connected component $A$ of $G$, we must have that $\sum_{k \in A} b_{kj}$ is even. Notice that in the matrix $M'$, the $(k, j)$ entry denotes exactly the parity of the number of rivers $\sigma \in S'$ such that $\sigma^{-1}(j) = k$, so $M'_{k, j} = b_{kj}$. Hence, the sum of the rows of $M'$ whose indices are in $A$ is exactly $\mathbf{0}$. The result follows. 
    
\end{proof}

Although \Cref{abstractsum1} gives a condition for the connected components of $G$ that can be used to lower bound the number of middle CNOT gates, the time needed to compute $M'$ is $O(n \cdot n!)$ as it would be necessary to loop through every river of $M$. We now introduce a shortcut that decreases this time to $O(n^{\omega})$. 

\begin{lemma}
    \label{fastcperfect}
    Let $M$ be an $n \times n$ reversible binary matrix. Then $M \land M^{-\top} + \textbf{I} = \textbf{I} + \sum_{\sigma \in S(M)} P_{\sigma}$. 
\end{lemma}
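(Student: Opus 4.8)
The plan is to peel off the additive $\mathbf{I}$'s (all arithmetic here is over $\mathbb{F}_2$, so the $\mathbf{I}$ on each side cancels) and prove the equivalent entrywise identity
\[
\bigl(M \land M^{-\top}\bigr)_{i,j} \;\equiv\; \bigl|\{\sigma \in S(M) : \sigma(i) = j\}\bigr| \pmod 2 \qquad \text{for all } i,j .
\]
The right-hand side is precisely the $(i,j)$ entry of $\sum_{\sigma \in S(M)} P_\sigma$ reduced mod $2$, since $(P_\sigma)_{i,j}=1$ exactly when $\sigma(i)=j$. So everything reduces to computing, mod $2$, the number of rivers of $M$ that pass through the fixed position $(i,j)$ and matching it to $(M \land M^{-\top})_{i,j}$.

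First I would handle the combinatorial side. A river $\sigma$ with $\sigma(i)=j$ can exist only when $M_{i,j}=1$, and in that case the restriction of $\sigma$ to a bijection $[n]\setminus\{i\}\to[n]\setminus\{j\}$ is exactly a river of the minor $M_{\hat\imath\hat\jmath}$ obtained by deleting row $i$ and column $j$; conversely every such restriction extends back. Hence $\bigl|\{\sigma \in S(M):\sigma(i)=j\}\bigr| = M_{i,j}\cdot \operatorname{per}\!\bigl(M_{\hat\imath\hat\jmath}\bigr)$, where the permanent is over $\mathbb{Z}$ and only its parity matters. Then I would invoke two standard facts on the linear-algebra side: (i) for an integer matrix the permanent and determinant agree mod $2$ (the sign of each permutation is $\pm1\equiv 1$), so $\operatorname{per}(M_{\hat\imath\hat\jmath}) \equiv \det(M_{\hat\imath\hat\jmath}) \pmod 2$; and (ii) the cofactor/adjugate formula $(M^{-1})_{j,i} = (-1)^{i+j}\det(M_{\hat\imath\hat\jmath})/\det M$, which over $\mathbb{F}_2$ reads $(M^{-1})_{j,i} \equiv \det(M_{\hat\imath\hat\jmath}) \pmod 2$ because $M \in GL_n(\mathbb{F}_2)$ forces $\det M = 1$ and the sign disappears.

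Splicing these together finishes it: by definition $(M^{-\top})_{i,j} = (M^{-1})_{j,i}$, and the entrywise AND of two $\{0,1\}$ values is their product, so
\[
(M \land M^{-\top})_{i,j} = M_{i,j}\cdot (M^{-1})_{j,i} \equiv M_{i,j}\cdot\det(M_{\hat\imath\hat\jmath}) \equiv M_{i,j}\cdot\operatorname{per}(M_{\hat\imath\hat\jmath}) = \bigl|\{\sigma\in S(M):\sigma(i)=j\}\bigr| \pmod 2,
\]
which is the desired entrywise identity; adding $\mathbf{I}$ back to both sides gives the stated form. I do not expect a genuine obstacle here: the whole point is that reducing mod $2$ collapses the permanent (which naturally enumerates rivers) onto the determinant (which the adjugate formula ties to $M^{-1}$). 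The only places needing care are bookkeeping — the index swap $(i,j)\leftrightarrow(j,i)$ relating $M^{-1}$ to $M^{-\top}$, fixing the convention for which row and column the minor deletes, and noting that the diagonal is not special since the explicit $\mathbf{I}$'s cancel and the argument above runs verbatim for $i=j$.
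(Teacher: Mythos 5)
Your proof is correct and follows essentially the same route as the paper's: both identify $(M^{-\top})_{i,j}$ with the $(i,j)$-cofactor of $M$ via the adjugate formula (with signs and $\det M$ trivial over $\mathbb{F}_2$) and use the mod-$2$ coincidence of permanent and determinant to equate that cofactor with the parity of the number of rivers through $(i,j)$. You simply make explicit the permanent-of-the-minor step that the paper leaves implicit.
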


\begin{proof}

    It is well known that $M^{-1} = \frac{1}{\det M} C^{\top}$ where $C$ is the cofactor matrix of $M$. Thus, the cofactor matrix of $M$ can be computed as $M^{-\top}$ when $M$ is a reversible binary matrix. We say a river $\sigma \in S(M)$ passes through the $(i, j)$-entry of $M$ if $\sigma(i) = j$. Notice that rivers can only pass through nonzero entries, and the parity of the number of rivers passing through any such entry is equal to the $(i, j)$-cofactor. Therefore the parity of the number of rivers in $S(M)$ passing through any given square is given by the matrix $M \land M^{-\top}$. Finally, we add $\mathbf{I}$ to balance the equation.
\end{proof}

Combining \Cref{abstractsum1} and \Cref{fastcperfect} gives us an $O(n^{\omega})$ algorithm shedding light on the connected components formed by the middle CNOT gates. This yields a straightforward lower bound on the number of middle gates.

\begin{theorem}[$c_{\textnormal{perfect}}$ algorithm]
    \label{basiccperfectbound}
    Given $n \times n$ matrix $M$ and a CNOT synthesis $M_1, M_2, \ldots, M_{k+1}$ of $M$, let $M' = M \land M^{-\top} + \mathbf{I}$. Let $G = (V, E)$ a graph such that $V = [n]$ and $(i, j) \in E$ if and only if there is some middle gate involved in the CNOT synthesis adding between rows $i, j$. Then $G$ contains at most $c_{\textnormal{perfect}}(M) := \frac{n + 2 \cdot \Emp M' + \Dup M'}{3}$ connected components.
\end{theorem}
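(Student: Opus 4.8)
The plan is to reduce the statement to a purely combinatorial claim about partitions of $[n]$ whose parts index $\mathbb{F}_2$-row-dependencies of $M'$. By \Cref{fastcperfect}, the matrix $M' = M \land M^{-\top} + \mathbf{I}$ in the statement equals the matrix $\mathbf{I} + \sum_{\sigma \in S(M)} P_{\sigma}$ of \Cref{abstractsum1}, so \Cref{abstractsum1} applies verbatim: for every connected component $A$ of $G$, the indicator vector $\mathbf{a}$ of $A$ satisfies $\mathbf{a} M' = \mathbf{0}$ over $\mathbb{F}_2$, i.e. the rows of $M'$ with indices in $A$ sum to zero. Writing $[n] = A_1 \sqcup \cdots \sqcup A_c$ for the vertex sets of the $c$ components, I therefore have a partition of $[n]$ into $c$ parts each of which is a zero-sum collection of rows of $M'$, and it suffices to prove $c \le \tfrac{1}{3}\bigl(n + 2\,\Emp M' + \Dup M'\bigr)$ for every such partition.

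Next I would classify the parts by size (writing $r_i$ for the $i$th row of $M'$). A singleton part $\{i\}$ forces $r_i = \mathbf{0}$, so every singleton is an all-zero row; let $s_1$ be their number, so $s_1 \le \Emp M'$. A size-two part $\{i,j\}$ forces $r_i = r_j$, hence is either a pair of all-zero rows or a pair of equal nonzero rows; write $s_2 = s_2^{0} + s_2^{\mathrm{nz}}$ accordingly (there are no mixed size-two parts, since $r_i=r_j$ and $r_i=\mathbf 0$ force $r_j=\mathbf 0$). Let $s_{\ge 3}$ be the number of parts of size at least three. Counting vertices,
\[
n \;=\; \sum_t |A_t| \;\ge\; s_1 + 2 s_2 + 3 s_{\ge 3},
\]
so $3 s_{\ge 3} \le n - s_1 - 2 s_2$ and thus $3c = 3s_1 + 3s_2 + 3s_{\ge 3} \le n + 2 s_1 + s_2$. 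It remains to show $2 s_1 + s_2 \le 2\,\Emp M' + \Dup M'$.

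For this I would treat the two row-types separately. The all-zero rows of $M'$ are partitioned among the $s_1$ singletons and the $s_2^{0}$ all-zero size-two parts, consuming $s_1 + 2 s_2^{0}$ of them, so $s_1 + 2 s_2^{0} \le \Emp M'$; adding this to $s_1 \le \Emp M'$ gives $2 s_1 + 2 s_2^{0} \le 2\,\Emp M'$, hence $2 s_1 + s_2^{0} \le 2\,\Emp M'$. Meanwhile the $s_2^{\mathrm{nz}}$ nonzero size-two parts form a set of pairwise-disjoint pairs of equal nonzero rows, so $s_2^{\mathrm{nz}} \le \Dup M'$ by definition of $\Dup$. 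Summing, $2 s_1 + s_2 \le 2\,\Emp M' + \Dup M'$, and combining with $3c \le n + 2 s_1 + s_2$ yields $c \le c_{\textnormal{perfect}}(M)$.

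I expect the only real subtlety to be the bookkeeping in the last paragraph: correctly charging size-two all-zero parts against the $\Emp$ budget at rate two, and using that $\Dup M'$ counts \emph{disjoint} duplicate pairs (which is exactly what keeps $s_2^{\mathrm{nz}} \le \Dup M'$ valid even when many rows coincide). Everything else is elementary counting, and crucially no achievability is required — the argument bounds every partition produced by \Cref{abstractsum1}, hence in particular the one coming from the middle gates of the given synthesis.
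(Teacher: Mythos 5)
Your proof is correct and follows essentially the same route as the paper's: invoke \Cref{fastcperfect} and \Cref{abstractsum1} to get zero-sum components, then count components by size ($1$, $2$, $\ge 3$) and charge the small ones against $\Emp M'$ and $\Dup M'$. The only difference is cosmetic — the paper combines the inequalities $x \le \Emp M'$ and $x + 2y \le \Emp M' + 2\Dup M'$ in one algebraic step rather than splitting the size-two parts into all-zero and nonzero-duplicate pairs, but both yield the same bound $2x + y \le 2\,\Emp M' + \Dup M'$.
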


\begin{proof}

    By \Cref{abstractsum1}, the sum of the rows of $M'$ corresponding to each connected component is zero. There are at most $\Emp M'$ rows contained in connected components of size $1$ and at most $\Emp M' + 2\cdot\Dup M'$ rows contained in connected components of size $\leq 2$. Let $x$ be the number of size $1$ connected components and let $y$ be the number of size $2$ connected components. Then there must be at most $x + y + \frac{n - x - 2y}{3} = \frac{2x + y + n}{3}$ connected components in $G$. Since $x + 2y \leq \Emp M' + 2\cdot\Dup M'$ and $x \leq \Emp M'$,

    \begin{align*}\frac{2x + y + n}{3} &= \frac{3x + (x+2y) + 2n}{6} \\ &\leq \frac{3 \cdot \Emp M' + (\Emp M' + 2 \cdot \Dup M') + 2n}{6} \\ &= \frac{n + 2 \cdot \Emp M' + \Dup M'}{3}.\end{align*}
\end{proof}

We remark that we can find $\Emp M'$ and $\Dup M'$ in $O(n^2)$ time by sorting the rows of $M'$.

\begin{corollary}
    \label{cperfectfinal}
    The number of middle gates in any CNOT decomposition of $n \times n$ matrix $M$ is at least $n - c_{\textnormal{perfect}}(M)$. Furthermore, $n - c_{\textnormal{perfect}}(M)$ can be computed in $O(n^{\omega})$ time.
\end{corollary}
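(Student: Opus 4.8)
The plan is to derive Corollary~\ref{cperfectfinal} as an almost immediate consequence of the two results that precede it, namely Theorem~\ref{basiccperfectbound} ($c_{\textnormal{perfect}}$ algorithm) and the runtime remarks for $M'$, together with the complexity of the linear-algebra operations needed to form $M'$. First I would recall the setup: given any CNOT decomposition of $M$, the middle gates in it induce the graph $G = (V,E)$ on $V = [n]$ where an edge $(i,j)$ is present whenever some middle gate in the decomposition adds between rows $i$ and $j$. By Theorem~\ref{basiccperfectbound}, $G$ has at most $c_{\textnormal{perfect}}(M)$ connected components. The number of middle gates is at least $|E|$ in the sense that each middle gate contributes (at least) one edge; and any graph on $n$ vertices with exactly $c$ connected components must have at least $n - c$ edges (a forest of $c$ components uses $n - c$ edges, and fewer edges would force more components). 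Hence the number of middle gates is at least $n - (\text{number of components of } G) \ge n - c_{\textnormal{perfect}}(M)$, giving the first claim.

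For the second claim, I would spell out the cost of computing $n - c_{\textnormal{perfect}}(M) = n - \tfrac{1}{3}\bigl(n + 2\,\Emp M' + \Dup M'\bigr)$. This requires: (i) computing $M^{-1}$ over $\mathbb{F}_2$ and transposing it to get $M^{-\top}$, which is $O(n^{\omega})$ by fast matrix inversion reducing to matrix multiplication; (ii) forming the entrywise AND $M \land M^{-\top}$ and adding $\mathbf{I}$, which is $O(n^2)$; (iii) computing $\Emp M'$ and $\Dup M'$ by sorting the $n$ rows of $M'$ (each of length $n$), which is $O(n^2 \log n)$ — and this is dominated by $O(n^{\omega})$ since $\omega > 2$; and (iv) a constant amount of arithmetic to assemble the final value. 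Summing these, the total is $O(n^{\omega})$, as claimed. I would cite Lemma~\ref{fastcperfect} to justify that $M'$ as defined in Theorem~\ref{basiccperfectbound} equals $\mathbf{I} + \sum_{\sigma \in S(M)} P_\sigma$, so that the efficiently computable formula $M \land M^{-\top} + \mathbf{I}$ is the right object, and cite the $O(n^2)$ remark after Theorem~\ref{basiccperfectbound} for step (iii).

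I do not anticipate a genuine obstacle here — the corollary is a packaging statement. The one point requiring a sentence of care is the inequality "number of middle gates $\ge |E(G)| \ge n - v(G)$": one must note that distinct middle gates between the same pair $\{i,j\}$ collapse to a single edge, so $|E(G)|$ only lower-bounds the middle-gate count, but that is fine since the bound goes the right way. A second small point is that Theorem~\ref{basiccperfectbound} bounds the number of connected components of $G$ rather than directly bounding edges, so one must invoke the elementary forest inequality $|E| \ge n - (\text{components})$ to convert the component bound into an edge (hence middle-gate) bound. Both steps are routine, so the proof is short.
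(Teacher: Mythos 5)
Your proposal is correct and matches the argument the paper intends (the corollary is stated without an explicit proof, as an immediate consequence of Theorem~\ref{basiccperfectbound}): you correctly combine the component bound on $G$ with the spanning-forest inequality $|E| \geq n - (\text{number of components})$ and the observation that each edge of $G$ is witnessed by at least one middle gate, and your runtime accounting ($O(n^{\omega})$ inversion dominating the $O(n^2)$ or $O(n^2\log n)$ row-sorting step) is sound. No gaps.
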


\section{The LMC Bound}
\label{mainalgo}

We now combine the lower bounds found in Sections \ref{allgates}, \ref{middlegates} alongside other minor optimizations. We begin by proving that certain matrix operations preserve the value of $s(M)$.

\begin{theorem}
    \label{sizeequivalence}
    For any $n \times n$ reversible binary matrix $M$ and any $n \times n$ permutation matrix $P$, we have $s(M) = s(M^{-1}) = s(M^{\top}) = s(P^{-1}MP)$.
\end{theorem}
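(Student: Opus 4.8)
The plan is to establish the four equalities in $s(M) = s(M^{-1}) = s(M^{\top}) = s(P^{-1}MP)$ by exhibiting, for each transformation, a way to convert a CNOT synthesis of one matrix into a CNOT synthesis of the other of the same length; since each transformation is an involution or has an obvious inverse, this immediately gives equality rather than mere inequality. Each argument will be a short computation at the level of the generators $\textrm{CNOT}(i,j)$, so I will not grind through the bookkeeping in the plan.

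First I would handle $s(M) = s(M^{-1})$. If $\textbf{I} = M_1, M_2, \ldots, M_{k+1} = M$ is a CNOT synthesis with $M_{t+1} = M_t E_t$ for CNOT matrices $E_t$ (so $M = E_1 E_2 \cdots E_k$), then $M^{-1} = E_k^{-1} \cdots E_1^{-1}$; the key observation is that each $E_t$ equals its own inverse over $\mathbb{F}_2$ since $\textrm{CNOT}(i,j)^2 = \textbf{I}$, so $M^{-1} = E_k \cdots E_1$ is a product of $k$ CNOT matrices, giving a synthesis of $M^{-1}$ of length $k$ and hence $s(M^{-1}) \le s(M)$; applying the same to $M^{-1}$ gives the reverse inequality. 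For $s(M) = s(M^{\top})$, I would use that the transpose of the CNOT matrix adding row $i$ to row $j$ is the CNOT matrix adding row $j$ to row $i$, so transposing $M = E_1 \cdots E_k$ yields $M^{\top} = E_k^{\top} \cdots E_1^{\top}$, again a product of $k$ CNOT matrices; symmetry of the transpose then gives equality. For $s(P^{-1}MP)$, conjugating $M = E_1 \cdots E_k$ gives $P^{-1}MP = (P^{-1}E_1 P)\cdots(P^{-1}E_k P)$, and $P^{-1}\textrm{CNOT}(i,j)P = \textrm{CNOT}(\pi^{-1}(i), \pi^{-1}(j))$ where $\pi$ is the permutation of $P$ — conjugating a CNOT matrix by a permutation matrix just relabels the control and target — so this is again a length-$k$ synthesis, and running the argument with $P^{-1}$ in place of $P$ closes the loop.

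I do not expect a serious obstacle here; the only point requiring a little care is making sure the synthesis convention (sequences starting at $\textbf{I}$ with single CNOT steps) is correctly translated into the multiplicative "product of CNOT matrices" form and back, and checking that the sequence $\textbf{I}, E_k, E_k E_{k-1}, \ldots$ of partial products really is a valid CNOT synthesis in the sense of the definition — i.e. that consecutive terms differ by one CNOT operation, which holds because appending a factor $E_t$ on the right is exactly a single CNOT operation on the accumulated matrix. The transpose case needs the analogous remark that left-multiplication versus right-multiplication by a CNOT matrix is immaterial for the "differs by one CNOT operation" relation, which is also clear. So the proof is essentially four one-line reductions, each packaged with the remark that the relevant operation is an involution (for $M\mapsto M^{-1}$ and $M\mapsto M^{\top}$) or invertible (for $M \mapsto P^{-1}MP$).
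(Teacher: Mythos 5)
Your argument is correct and is essentially the same as the paper's proof: both reduce each equality to rewriting a length-$k$ product of CNOT matrices for $M$ as a length-$k$ product for $M^{-1}$ (using that each CNOT matrix is an involution over $\mathbb{F}_2$), for $M^{\top}$ (transposing reverses the product and swaps control with target), and for $P^{-1}MP$ (conjugation relabels qubits), then invoking symmetry of each transformation. The only difference is that you spell out the bookkeeping between the sequence-of-matrices definition of a synthesis and the product-of-generators form, which the paper leaves implicit.
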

\begin{proof}
    Consider a CNOT decomposition $M = R_1R_2\cdots R_{s(M)}$. Taking the inverse of $M$, we obtain a CNOT decomposition $M^{-1} = R_{s(M)}\cdots R_2R_1$. By symmetry, $s(M) = s(M^{-1})$. Taking the transpose of $M$, we obtain a CNOT decomposition $M^{\top} = R_{s(M)}^{\top}\cdots R_2^{\top}R_1^{\top}$. By symmetry, $s(M) = s(M^{\top})$. Finally, conjugating $M$ by $P$, we obtain a CNOT decomposition $P^{-1}MP = R'_1R'_2\cdots R'_{s(M)}$ where $R'_i = P^{-1}R_iP$ is a CNOT gate. By symmetry, $s(M) = s(P^{-1}MP)$, completing the proof.
\end{proof}

Next, we show the number of zeroes on the main diagonal of $M$ gives a second lower bound on the total number of non-link gates in any CNOT decomposition of $M$.

\begin{theorem}
    \label{zeroes}
    Let $M_1, M_2, \ldots, M_{k+1}$ be a CNOT synthesis of $n \times n$ matrix $M$. If $M$ has $\#_0(M)$ zeroes on its main diagonal, then there must be at least $\#_0(M)$ CNOT operations involved in the synthesis which are not link gates.
\end{theorem}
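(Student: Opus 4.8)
The plan is to follow the one quantity that controls the statement: track $\#_0$ (the number of zeros on the main diagonal) along the synthesis $\textbf{I} = M_1, M_2, \ldots, M_{k+1} = M$ and show that every time $\#_0$ increases by one, it must be due to a non-link gate. Since $\#_0(\textbf{I}) = 0$ and $\#_0(M_{k+1}) = \#_0(M)$, the total amount by which $\#_0$ increases over the sequence is at least $\#_0(M)$, and each such unit of increase can be charged to a distinct non-link gate, giving the bound.

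First I would pin down the local effect of a single $\textrm{CNOT}(i,j)$ on the main diagonal. Adding row $i$ to row $j$ (with $i \neq j$) changes only row $j$, replacing it by $M_{j,\cdot} \oplus M_{i,\cdot}$; in particular the only diagonal entry that can change is the $(j,j)$ entry, which becomes $M_{j,j} \oplus M_{i,j}$. Hence consecutive matrices in any synthesis satisfy $|\#_0(M_{\ell+1}) - \#_0(M_\ell)| \leq 1$, and an increase of exactly one forces $M_{i,j} = 1$ (and $M_{j,j} = 1$).

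Next I would invoke the structural description of link gates established in the proof of \Cref{linkdec}: a link gate $\textrm{CNOT}(i,j)$ acts on rows $i$ and $j$ lying in distinct connected components of $G_v(M)$. Consequently there is no edge between $i$ and $j$ in $G_v(M)$, so $M_{i,j} = 0$, and by the previous paragraph $\#_0$ is unchanged by a link gate. Writing $\delta_\ell = \#_0(M_{\ell+1}) - \#_0(M_\ell)$, we then have $\sum_{\ell=1}^k \delta_\ell = \#_0(M)$ with each $\delta_\ell \in \{-1,0,1\}$ and $\delta_\ell = 0$ for every link gate, so $\#_0(M) \leq |\{\ell : \delta_\ell = 1\}| \leq |\{\ell : \text{the } \ell\text{-th gate is not a link gate}\}|$, which is exactly the claim.

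The only step with any real content is the observation that a link gate has control–target entry $M_{i,j} = 0$, and this is immediate once we recall that link gates bridge two components of $G_v(M)$, so I expect no genuine obstacle; the whole argument is a one-variable potential/charging argument. The single thing to be careful with is the bookkeeping: a CNOT can also \emph{decrease} $\#_0$, so the final inequality must be phrased in terms of the number of unit increases rather than the net change in $\#_0$.
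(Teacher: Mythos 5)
Your proposal is correct and follows essentially the same route as the paper: each CNOT changes at most one diagonal entry, and a link gate $\textrm{CNOT}(i,j)$ has $M_{i,j}=0$ because $i$ and $j$ lie in different components of $G_v(M)$, so it leaves the diagonal fixed. Your extra care about decreases in $\#_0$ (counting unit increases rather than net change) is a slightly more explicit version of the same charging argument.
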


\begin{proof}
    In the identity, there are no zeroes on the main diagonal. Since any CNOT operation changes only one row, we can add at most one zero at a time to the main diagonal. Thus it suffices to show that link gates don't affect the main
    diagonal.

    In the proof of \Cref{linkdec}, we showed that for any link gate adding row $i$ to row $j$, we must have $i, j$ in different connected components of $G_v(M)$. This implies that $M_{i,j} = 0$, so the element $M_{j,j}$ on the main diagonal will not change. Since link gates cannot change the diagonal, we need at least $\#_0(M)$ other CNOT operations in order to synthesize $M$, as desired.
\end{proof}

Finally, we obtain the following lower bound on the size of CNOT circuits. 

\begin{theorem}[LMC Bound]
    \label{finalboss}
    For an $n \times n$ reversible binary matrix $M$, compute $\ell = n - v(M)$, \\ $m = n-\min \{ c_{\textnormal{perfect}}(M), c_{\textnormal{perfect}}(M^{\top}) \}$, and $c = e(M)-v(M)$. Then $$s(M) \geq \ell + \max \left\{m + c, \#_0(M), \#_0(M^{-1}) \right\}.$$
\end{theorem}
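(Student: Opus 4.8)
The overall plan is to prove the three inequalities $s(M) \geq \ell + m + c$, $s(M) \geq \ell + \#_0(M)$, and $s(M) \geq \ell + \#_0(M^{-1})$ separately and then take the maximum. The unifying device is that, by \Cref{sizeequivalence}, we may pass freely among $M$, $M^{\top}$, and $M^{-1}$, and by \Cref{lmc} the gates of any CNOT synthesis split into the pairwise-disjoint classes of link, middle, cut, and uncategorized gates; hence a lower bound on the synthesis length is obtained by adding up separate lower bounds on whichever of these classes we like.

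First I would record the invariances that make $\ell$ and $c$ common to $M$, $M^{\top}$, and $M^{-1}$. Straight from the definitions, $G_v(M^{\top}) = G_v(M)$ and $G_e(M^{\top}) \cong G_e(M)$ (interchange the row and column vertex labels), so $v(M^{\top}) = v(M)$ and $e(M^{\top}) = e(M)$. For the inverse, if $C_1, \ldots, C_r$ are the connected components of $G_v(M)$ then $M$ is block diagonal with respect to them, because the absence of an edge between $C_a$ and $C_b$ forces $M_{ij} = M_{ji} = 0$ for all $i \in C_a$, $j \in C_b$ with $a \neq b$; consequently $M^{-1}$ is block diagonal with the same blocks, so $G_v(M^{-1})$ has no edge crossing the $C_a$'s and $v(M^{-1}) \geq v(M)$. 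The symmetric argument applied to $M^{-1}$ gives the reverse inequality, so $v(M^{-1}) = v(M)$.

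For the bound $s(M) \geq \ell + m + c$: take any CNOT synthesis of $M$ and split its gates using \Cref{lmc}. By \Cref{linksandcuts} the link count is at least $\ell = n - v(M)$ and the cut count at least $c = e(M) - v(M)$, while by \Cref{cperfectfinal} the middle count is at least $n - c_{\textnormal{perfect}}(M)$; summing these three disjoint classes gives $s(M) \geq \ell + (n - c_{\textnormal{perfect}}(M)) + c$. Running the same argument on an optimal synthesis of $M^{\top}$ — which has length $s(M^{\top}) = s(M)$, whose link and cut counts are again at least $\ell$ and $c$ by the invariances above, and whose middle count is at least $n - c_{\textnormal{perfect}}(M^{\top})$ by \Cref{cperfectfinal} applied to $M^{\top}$ — gives $s(M) \geq \ell + (n - c_{\textnormal{perfect}}(M^{\top})) + c$. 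Taking the stronger of the two and using $\max\{\, n - c_{\textnormal{perfect}}(M),\, n - c_{\textnormal{perfect}}(M^{\top}) \,\} = m$ yields $s(M) \geq \ell + m + c$.

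Finally, for the diagonal bounds: in any synthesis of $M$, the at-least-$\ell$ link gates and the at-least-$\#_0(M)$ non-link gates (\Cref{zeroes}) are disjoint, so $s(M) \geq \ell + \#_0(M)$; applying \Cref{zeroes} and \Cref{linksandcuts} to an optimal synthesis of $M^{-1}$ (again of length $s(M)$) and invoking $v(M^{-1}) = v(M)$ gives $s(M) = s(M^{-1}) \geq (n - v(M^{-1})) + \#_0(M^{-1}) = \ell + \#_0(M^{-1})$. Combining the three inequalities completes the proof. Since all of the real content lives in the earlier lemmas, I expect the only delicate point to be this reduction among $M$, $M^{\top}$, and $M^{-1}$ — in particular verifying $v(M^{-1}) = v(M)$ so that $\ell$ is genuinely the same constant for all three matrices.
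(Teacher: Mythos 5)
Your proof is correct and follows essentially the same route as the paper: split the gates into the disjoint link/middle/cut classes via \Cref{lmc}, apply \Cref{linksandcuts}, \Cref{cperfectfinal}, and \Cref{zeroes}, and use \Cref{sizeequivalence} to transfer the bounds from $M^{\top}$ and $M^{-1}$ back to $M$. Your explicit verification that $v$ and $e$ are invariant under transposition and inversion fills in a step the paper leaves implicit, and it is exactly the step needed for the transferred bounds to be stated in terms of $v(M)$ and $e(M)$.
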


\begin{proof}
    By \Cref{lmc}, the sets of link, middle, and cut gates are disjoint. From \Cref{linksandcuts}, the number of link gates is at least $n-v(M)$ and the number of cut gates is at least $e(M) - v(M)$. From \Cref{cperfectfinal}, the number of middle gates is at least $n-c_{\textrm{perfect}}(M)$. By \Cref{zeroes}, at least $\#_0(M)$ non-link gates are required. Combining these results shows that $s(M) \geq \ell + \max \{ n - c_{\textrm{perfect}}(M) + c, \#_0(M)\}$. Since $s(M) = s(M^{-1}) = s(M^{\top}) = s(M^{-\top})$ by \Cref{sizeequivalence}, the result follows from taking the maximum of these lower bounds.
\end{proof}

\subsection{Cycle and Permutation Matrices}

We now apply \Cref{finalboss} to obtain exact values on the size of all $n$-cycle and permutation matrices, as well as the exact depth of $n$-cycle matrices for $n \geq 7$.

\begin{theorem}
    \label{ncycle}
    For any $n$-cycle $\sigma \in S_n$, $s(P_{\sigma}) = 3(n-1)$. Furthermore, let $M_1, M_2, \ldots, M_{3n - 2}$ be a CNOT synthesis of $P_{\sigma}$. Then the sets of link, middle, and cut gates involved in the CNOT synthesis each have size $n - 1$ and form a spanning tree on the set of qubits.
\end{theorem}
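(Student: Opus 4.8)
The plan is to apply the LMC bound (\Cref{finalboss}) with the sharper middle-gate estimate coming from \Cref{abstractsum1} rather than from $c_{\textnormal{perfect}}$, obtain $s(P_\sigma)=3(n-1)$, and then extract the structural statement from the observation that in a $3(n-1)$-gate synthesis all three lower bounds must be simultaneously tight.

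First I would evaluate the ingredients of the bound for $M=P_\sigma$. The graph $G_v(P_\sigma)$ is the underlying undirected graph of the $n$-cycle $\sigma$, which is connected, so $v(P_\sigma)=1$ and the bound forces $\ell=n-1$ link gates; the graph $G_e(P_\sigma)$ is the perfect matching $\{R_iC_{\sigma(i)}\}_i$, so $e(P_\sigma)=n$ and the bound forces $c=e(P_\sigma)-v(P_\sigma)=n-1$ cut gates. For the middle gates I would not use the (here too weak) $c_{\textnormal{perfect}}$ estimate but instead apply \Cref{abstractsum1} directly: a permutation matrix has exactly one river, so $M' = \mathbf{I}+\sum_{\rho\in S(P_\sigma)}P_\rho = \mathbf{I}+P_\sigma$, and over $\mathbb{F}_2$ the equation $\mathbf{a}(\mathbf{I}+P_\sigma)=\mathbf{0}$ says $\mathbf{a}$ is constant on the orbits of $\sigma$; since $\sigma$ is an $n$-cycle this forces $\mathbf{a}$ to be the zero vector or the all-ones vector. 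Hence every connected component $A$ of the middle-gate graph $G$ has all-ones indicator vector, so $G$ is connected and every synthesis uses at least $n-1$ middle gates. As \Cref{lmc} makes the link, middle, and cut categories disjoint, every synthesis of $P_\sigma$ has at least $3(n-1)$ gates. For the matching upper bound I would use \Cref{sizeequivalence} to conjugate to $\sigma=(1\,2\,\cdots\,n)=(1\,2)(2\,3)\cdots(n-1\,n)$, a product of $n-1$ transpositions, each realized by a $3$-CNOT swap (equivalently, invoke the $3(n-1)$ constructions of \cite{liu2024realization, planat2017magic, bataille2022quantum, vatan2004optimal}); this gives $s(P_\sigma)=3(n-1)$.

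Now fix a synthesis $M_1,\ldots,M_{3n-2}$ of $P_\sigma$: it has exactly $3(n-1)$ gates, and since each of the three disjoint categories already needs at least $n-1$ gates, every gate lies in exactly one category and each category has exactly $n-1$ gates. For the link gates, $v$ must fall from $v(\mathbf{I})=n$ to $v(P_\sigma)=1$ using only these $n-1$ gates, each dropping $v$ by exactly one (\Cref{linkdec}), so no other gate may raise $v$; a gate that leaves $v$ unchanged must have its control and target in the same component of $G_v$ and therefore cannot change the components of $G_v$, whereas each link gate merges the two components containing its control and target, which are distinct by \Cref{etovclaim}. Thus the components of $G_v$ evolve exactly as a union--find on $[n]$ driven by the link edges, and $n-1$ merges from $n$ singletons give a spanning tree. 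For the middle gates, $G$ is connected and there are exactly $n-1$ of them, so no two share a pair of endpoints and the $n-1$ middle edges form a spanning tree.

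The last and, I expect, hardest case is the cut gates. The natural analogue of the link argument is to track the partition $\pi_t$ of $[n]$ recording which rows lie in the same component of $G_e(M_t)$; I would first show that in this minimal synthesis $\pi_t$ changes only at link gates (a merge of the parts of the control and target, distinct by \Cref{etovclaim}) and at cut gates (a split of one part into two, one containing the target). The key sub-claim is that middle gates leave $\pi_t$ fixed: here $e$ returns to its initial value $n$ after the $n-1$ link gates lower it by one each and the $n-1$ cut gates raise it by one each, so every middle gate leaves $e$ unchanged; but adding the control row to the target when the two lie in different $G_e$-components would merge those components and drop $e$, so they share a component, the gate only rewires edges inside it, and with $\Delta e=0$ that component cannot split. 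With this established I would run the synthesis backward, so that cut gates become merges of $G_e$-components and link gates become splits, and aim to conclude that the $n-1$ cut edges form a spanning tree. The genuine obstacle is that, unlike $G_v$ where only link gates move the partition, here both cut gates (splits) and link gates (merges) move $\pi_t$, so one does not get a clean union--find; closing this gap appears to require a careful invariant tracking on which side of each link-induced split the endpoints of the still-unplaced cut edges fall, and this — together with a complete verification of the middle-gate invariance above — is where I expect the real work of the proof to lie.
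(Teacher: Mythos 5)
Your lower bound, upper bound, and the spanning-tree claims for the link and middle gates are all correct and essentially follow the paper's route: the paper likewise evaluates $v(P_\sigma)=1$, $e(P_\sigma)=n$, and applies \Cref{abstractsum1} with $M'=\mathbf{I}+P_\sigma$ to force the middle-gate graph to be connected (your observation that the raw $c_{\textnormal{perfect}}$ formula is too weak here and that one must invoke \Cref{abstractsum1} directly is accurate --- the paper's line ``$c_{\textrm{perfect}}(M)=1$'' is really shorthand for exactly this). Your union--find argument for the link gates is a slightly more explicit version of the paper's contradiction via separability, and your middle-gate ledger argument (every $\Delta e=-1$ gate is a link gate, every $\Delta e=+1$ gate is a cut gate, hence middle gates fix $e$) is sound.

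The genuine gap is the one you flag yourself: the cut-gate spanning tree is asserted but not proved, and your proposed backward union--find does not go through cleanly because link gates also move the row partition of $G_e$. The paper closes this with a forward contradiction argument that fits naturally on top of the machinery you already set up. Suppose the cut edges do not span; since there are only $n-1$ of them, there is a partition $[n]=A\sqcup B$ with no cut gate having one endpoint in $A$ and the other in $B$. Because the link gates do span, some link gate adds a row $a\in A$ to a row $b\in B$, and from that moment $R_a$ and $R_b$ lie in a common $G_e$-component. Now track the invariant ``some $G_e$-component contains both an $A$-row and a $B$-row.'' Only cut gates split components (merges and middle gates cannot destroy the invariant), and when $\textrm{CNOT}(i,j)$ splits a component, each of the two resulting pieces contains one of $R_i,R_j$ (only row $j$'s incidences change, so $R_i$ retains its neighbors). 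If $i$ and $j$ lie on the same side of the partition, say both in $A$, then each piece contains an $A$-row, and whichever piece inherits a $B$-row remains mixed; so the invariant can only be destroyed by a cut gate with $i,j$ on opposite sides of $A\sqcup B$, which we assumed does not exist. The invariant therefore persists to the final matrix, contradicting that $G_e(P_\sigma)$ is a perfect matching in which no two rows are connected. Adding this argument (or an equivalent invariant) is required to complete your proof; everything else stands.
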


\begin{proof}
    Applying \Cref{finalboss}, we note that $v(P_{\sigma}) = 1$ and $e(P_{\sigma}) = n$. Applying \Cref{abstractsum1} and \Cref{fastcperfect}, we have $P'_{\sigma} = P_{\sigma} \land P_{\sigma}^{-\top} + \mathbf{I} = P_{\sigma} + \mathbf{I}$. In particular, the only vector $v$ such that $vP'_{\sigma} = \mathbf{0}$ is $\mathbf{1}$, so the graph $G = (V, E)$, where $V = [n]$ and $(i, j) \in E$ if and only if there is a middle gate adding between rows $i, j$, must be connected. Thus $c_{\textrm{perfect}}(M) = 1$ and $s(P_{\sigma}) \geq 3(n-1)$. \Cref{ncycleexamples} gives various constructions of $s(P_{\sigma}) = 3(n-1)$ alongside the spanning tree structures of the link, middle, and cut gates respectively. We add three original constructions which do not use the swapping method in addition to those given in \cite{bataille2022quantum,liu2024realization,moore2001parallel}.

    \begin{figure*}[!hbtp]
        \centering
        \scalebox{1.25}{
        \begin{tabular}{c|c|c|c|c}
            Pseudocode & L, M, C ordering & \; Link \; & \; Middle \; & \; Cut \; \\ \hline
            \scalebox{0.7}{
            \begin{minipage}{.25\textwidth}
            \begin{algorithm}[H]
            \begin{algorithmic}[0]
            \For{$i = 2$ \textbf{to} $n$}
                \State CNOT($1$, $i$)
                \State CNOT($i$, $1$)
                \State CNOT($1$, $i$)
            \EndFor
            \end{algorithmic}
            \end{algorithm}
            \end{minipage} }
            & $\overbrace{\textrm{LMCLMC}\cdots\textrm{LMC}}^{3(n-1)}$ & Star & Star & Star \\ \hline
            \scalebox{0.7}{
            \begin{minipage}{.25\textwidth}
            \begin{algorithm}[H]
            \begin{algorithmic}[0]
            \For{$i = 1$ \textbf{to} $n-1$}
                \State CNOT($n-i$, $n-i+1$)
                \State CNOT($n-i+1$, $n-i$)
                \State CNOT($n-i$, $n-i+1$)
            \EndFor
            \end{algorithmic}
            \end{algorithm}
            \end{minipage} 
            }
            & $\overbrace{\textrm{LMCLMC}\cdots\textrm{LMC}}^{3(n-1)}$ & Path & Path & Path \\ \hline

            \scalebox{0.7}{
            \begin{minipage}{.25\textwidth}
            \begin{algorithm}[H]
            \begin{algorithmic}[0]
            \For{$i = 1$ \textbf{to} $n-1$}
                \State CNOT($i$, $i+1$)
            \EndFor
            \For{$i = 1$ \textbf{to} $n-1$}
                \State CNOT($i+1$, $i$)
            \EndFor
            \For{$i = 1$ \textbf{to} $n-1$}
                \State CNOT($i$, $n$)
            \EndFor
            \end{algorithmic}
            \end{algorithm}
            \end{minipage} 
            }
            & \; $\overbrace{\textrm{LL}\cdots \textrm{L}}^{n-1}\overbrace{\textrm{MM}\cdots \textrm{M}}^{n-1}\overbrace{\textrm{CC}\cdots \textrm{C}}^{n-1}$ \; & Star & Path & Star \\ \hline

            \scalebox{0.7}{
            \begin{minipage}{.25\textwidth}
            \begin{algorithm}[H]
            \begin{algorithmic}[0]
            \For{$i = 1$ \textbf{to} $n-1$}
                \State CNOT($n-i$, $n$)
                \State CNOT($n$, $n-i$)
            \EndFor
            \State CNOT($1$, $n$)
            \For{$i = 1$ \textbf{to} $n-2$}
                \State CNOT($i+1$, $i$)
            \EndFor
            \end{algorithmic}
            \end{algorithm}
            \end{minipage} 
            }
            & $\overbrace{\textrm{LMLM}\cdots \textrm{LM}}^{2(n-1)}\overbrace{\textrm{CC}\cdots \textrm{C}}^{n-1}$ & Star & Star & Path \\ \hline

            \scalebox{0.7}{
            \begin{minipage}{.25\textwidth}
            \begin{algorithm}[H]
            \begin{algorithmic}[0]
            \For{$i = 2$ \textbf{to} $n$}
                \State CNOT($1$, $i$)
            \EndFor
            \For{$i = 1$ \textbf{to} $n-1$}
                \State CNOT($i+1$, $i$)
                \State CNOT($i$, $i+1$)
            \EndFor
            \end{algorithmic}
            \end{algorithm}
            \end{minipage} 
            }
            & $\overbrace{\textrm{LL}\cdots \textrm{L}}^{n-1}\overbrace{\textrm{MCMC}\cdots \textrm{MC}}^{2(n-1)}$ & Star & Path & Path \\
        \end{tabular}}
        \caption{Table of various constructions to synthesize an $n$-cycle circuit in $3(n-1)$ CNOT operations, alongside their decomposition into link, middle, and cut gates.}
        \label{ncycleexamples}
    \end{figure*}
    
    Therefore $s(P_{\sigma}) = 3(n-1)$. Now, consider a CNOT synthesis $M_1, M_2, \ldots, M_{3n-2}$ of $P_{\sigma}$. There must be exactly $n-1$ link, middle, and cut gates respectively. Because $G$ is connected, the set of middle gates forms a spanning tree on the set of qubits. Now assume for the sake of contradiction that the set of link gates does not form a spanning tree on the set of qubits. Then there is a partition $[n] = A \sqcup B$ such that there is never a CNOT operation between a row in $A$ and a row in $B$, as the first such gate would be a link gate. However, this implies that $P_{\sigma}$ is separable, giving contradiction. 

    Finally, assume for the sake of contradiction that the set of cut gates does not form a spanning tree over the $n$ qubits. Then there exists a partition $[n] = A \sqcup B$ such that there is no cut gate between any $a \in A$ and $b \in B$. There must be a link gate adding a row $a \in A$ to a row $b \in B$, as the link gates form a spanning tree. This means that at some point, $R_a, R_b$ are connected by this link gate. However, if there is no cut gate between $A$ and $B$, then there will always exist a row of $A$ and a row of $B$ that are connected in $G_e$ after this point, giving contradiction, as no two rows are connected in $G_e(P_{\sigma})$.
\end{proof}

Moore and Nilsson \cite{moore2001parallel} showed that the depth of any permutation circuit was at most $6$ by giving a construction using parallel CNOT gates. We demonstrate that this number is optimal for large $n$-cycle circuits.

\begin{corollary}
    Let $n \geq 7$. Then $\max_{\sigma \in S_n} d(P_{\sigma}) = 6$, and $d(P_{\sigma}) = 6$ when $\sigma$ is an $n$-cycle.
\end{corollary}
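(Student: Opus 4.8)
The plan is to combine the exact size formula $s(P_\sigma) = 3(n-1)$ from \Cref{ncycle} with an elementary inequality relating depth and size. First I would note that a parallel CNOT gate, by definition, acts on at most $n$ qubits partitioned into disjoint controlled pairs, hence is a composition of at most $\lfloor n/2 \rfloor$ ordinary CNOT gates; since these gates operate on pairwise disjoint qubit sets they commute, so their product in any order equals the parallel gate's matrix. Concatenating the decompositions of the $d(M)$ parallel layers of a minimal parallel synthesis of a reversible binary matrix $M$ therefore produces an ordinary CNOT synthesis, giving the bound $s(M) \le \lfloor n/2 \rfloor \cdot d(M)$ for every $n \times n$ reversible binary matrix $M$.

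Next I would apply this to $M = P_\sigma$ for an $n$-cycle $\sigma$. By \Cref{ncycle} we have $s(P_\sigma) = 3(n-1)$, so $d(P_\sigma) \ge 3(n-1)/\lfloor n/2\rfloor$, and it suffices to check that $3(n-1) > 5\lfloor n/2\rfloor$ for $n \ge 7$. A short parity case check does this: for odd $n$ the inequality reads $3n-3 > \tfrac{5(n-1)}{2}$, i.e. $n > 1$; for even $n$ it reads $3n-3 > \tfrac{5n}{2}$, i.e. $n > 6$, so $n \ge 8$. Hence in all cases with $n \ge 7$ we get $d(P_\sigma) \ge 6$.

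Finally I would invoke the construction of Moore and Nilsson \cite{moore2001parallel}, which synthesizes an arbitrary permutation circuit in depth at most $6$; this gives $d(P_\sigma) \le 6$, so $d(P_\sigma) = 6$ whenever $\sigma$ is an $n$-cycle, and at the same time shows $\max_{\sigma \in S_n} d(P_\sigma) \le 6$. Since this maximum is attained by any $n$-cycle (which exists for $n \ge 7$), we conclude $\max_{\sigma \in S_n} d(P_\sigma) = 6$.

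I do not expect a serious obstacle: the only points requiring care are making the inequality $s(M) \le \lfloor n/2\rfloor\, d(M)$ fully rigorous — in particular that the constituent CNOT gates of a parallel gate multiply to it independently of ordering — and the small arithmetic verification that $n \ge 7$ is exactly the threshold for both parities. Everything else is a direct appeal to \Cref{ncycle} and the cited depth-$6$ construction.
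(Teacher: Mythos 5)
Your proposal is correct and follows essentially the same route as the paper: decompose each parallel layer into at most $\lfloor n/2\rfloor$ disjoint CNOT gates to get $d(M)\ge\lceil s(M)/\lfloor n/2\rfloor\rceil$, apply $s(P_\sigma)=3(n-1)$ from \Cref{ncycle} to conclude $d(P_\sigma)\ge 6$ for $n\ge 7$, and cite the Moore--Nilsson depth-$6$ construction for the matching upper bound. Your explicit parity check and the remark that the disjoint constituent gates commute are just slightly more detailed versions of steps the paper leaves implicit.
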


\begin{proof}
Note that any parallel CNOT operation can be decomposed into at most \(\left\lfloor \frac{n}{2} \right\rfloor\) CNOT gates. Thus, we have 
\[
d(M) \geq \left\lceil \frac{s(M)}{\left\lfloor \frac{n}{2} \right\rfloor } \right\rceil
\]
for any \(n \times n\) reversible binary matrix \(M\). Hence, for \(n \geq 7\), any \(n\)-cycle \(\sigma\) satisfies \(d(P_{\sigma}) \geq 6\). Finally, the construction given in \cite{moore2001parallel} shows that \(d(M) \leq 6\) for any choice of \(P_{\sigma}\).
\end{proof}

\begin{theorem}
    For any $\sigma \in S_n$, we have $s(P_{\sigma}) = 3(n-k)$, where $k$ is the number of disjoint cycles in $\sigma$.
\end{theorem}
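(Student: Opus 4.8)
The plan is to prove $s(P_\sigma)\le 3(n-k)$ and $s(P_\sigma)\ge 3(n-k)$ separately, with the lower bound mirroring the argument of \Cref{ncycle} but carried out cycle by cycle.

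\emph{Upper bound.} I would write $\sigma$ as the product of its $k$ disjoint cycles (counting each fixed point as a $1$-cycle) of lengths $n_1,\dots,n_k$, so $\sum_i n_i=n$. Since distinct cycles act on disjoint sets of qubits, I can synthesize the cyclic permutation on each block of $n_i$ qubits independently using one of the $3(n_i-1)$-gate constructions from \Cref{ncycle}; restricted to those $n_i$ qubits, these CNOT gates leave every other qubit untouched. Concatenating the $k$ syntheses yields a CNOT decomposition of $P_\sigma$ with $\sum_i 3(n_i-1)=3(n-k)$ gates, so $s(P_\sigma)\le 3(n-k)$.

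\emph{Lower bound.} Fix an arbitrary CNOT decomposition of $P_\sigma$ and bound its link, middle, and cut gates separately. The graph $G_v(P_\sigma)$ is a disjoint union of one cycle over the support of each nontrivial cycle of $\sigma$ together with one isolated vertex per fixed point, so $v(P_\sigma)=k$; and $G_e(P_\sigma)$ is the perfect matching $\{R_i,C_{\sigma(i)}\}_{i\in[n]}$, so $e(P_\sigma)=n$. By \Cref{linksandcuts} the decomposition uses at least $n-v(P_\sigma)=n-k$ link gates and at least $e(P_\sigma)-v(P_\sigma)=n-k$ cut gates. For the middle gates, the only river of $P_\sigma$ is $\sigma$ itself (row $i$ of $P_\sigma$ has its unique $1$ in column $\sigma(i)$), so the matrix $M'$ appearing in \Cref{abstractsum1} is $M'=\mathbf I+P_\sigma$. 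If $A$ is a connected component of the middle-gate graph $G$, then \Cref{abstractsum1} gives $\mathbf 1_A M' = \mathbf 1_A+\mathbf 1_{\sigma(A)}=\mathbf 0$, i.e.\ $\sigma(A)=A$, so $A$ is a union of cycles of $\sigma$. As the components of $G$ partition $[n]$ into $\sigma$-invariant sets, $G$ has at most $k$ components, hence at least $n-k$ edges, hence the decomposition uses at least $n-k$ middle gates. By \Cref{lmc} these three families of gates are disjoint, so the decomposition contains at least $3(n-k)$ gates and $s(P_\sigma)\ge 3(n-k)$.

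The step I expect to be the crux is the middle-gate bound. The packaged quantity $c_{\textnormal{perfect}}$ of \Cref{basiccperfectbound} is too weak here: once $\sigma$ has a cycle of length exceeding $3$, the value $n-c_{\textnormal{perfect}}(P_\sigma)$ drops below $n-k$, so, exactly as in \Cref{ncycle}, one must invoke \Cref{abstractsum1} directly and exploit that the left null space of $\mathbf I+P_\sigma$ over $\mathbb F_2$ is spanned by the indicator vectors of the $\sigma$-invariant subsets of $[n]$. The one bookkeeping point to watch is that fixed points of $\sigma$ yield zero rows of $M'$, but this is harmless since a singleton $\{i\}$ with $\sigma(i)=i$ is $\sigma$-invariant and so may legitimately form its own component of $G$.
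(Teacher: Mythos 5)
Your proof is correct, but your lower bound takes a genuinely different route from the paper's. The paper argues by minimal counterexample on the number of cycles $k$: if $s(P_{\sigma})<3(n-k)$ with $k\geq 2$, appending the three-gate swap $\textrm{CNOT}(i,j),\textrm{CNOT}(j,i),\textrm{CNOT}(i,j)$ for $i,j$ in different cycles merges those cycles, producing $\sigma'$ with $k-1$ cycles and $s(P_{\sigma'})<3(n-k)+3=3(n-(k-1))$, contradicting either minimality or the $k=1$ base case (\Cref{ncycle}). You instead redo the link/middle/cut count directly: $v(P_{\sigma})=k$ and $e(P_{\sigma})=n$ give $n-k$ link and $n-k$ cut gates via \Cref{linksandcuts}, and for the middle gates you correctly bypass the packaged $c_{\textnormal{perfect}}$ bound of \Cref{basiccperfectbound} (which, as you note, is strictly weaker than $k$ once a cycle of length at least $4$ is present) and invoke \Cref{abstractsum1} with $M'=\mathbf I+P_{\sigma}$, whose left null space forces every component of the middle-gate graph to be $\sigma$-invariant, hence at most $k$ components and at least $n-k$ middle gates; \Cref{lmc} then sums the three counts. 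Your approach is self-contained and has the added benefit of showing that a minimal synthesis contains exactly $n-k$ gates of each type (the natural generalization of the spanning-tree structure in \Cref{ncycle}), whereas the paper's swap-appending reduction is shorter and leans entirely on the already-proved $n$-cycle case. Both are valid; your handling of fixed points as legitimate singleton components is the right bookkeeping.
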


\begin{proof}
    Using the construction from \Cref{ncycle} on each disjoint cycle of $\sigma$, we achieve a synthesis of $P_{\sigma}$ in $3(n-k)$ CNOT gates, showing that $s(P_{\sigma}) \leq 3(n-k)$.

    Now for the sake of contradiction assume that there exists some $\sigma \in S_n$ with minimal $k$ that satisfies $s(P_{\sigma})<3(n-k)$. \Cref{ncycle} implies that $k \geq 2$. Now, choose qubits $i, j$ in disjoint cycles of $\sigma$. Appending $\textrm{CNOT}(i, j), \textrm{CNOT}(j, i), \textrm{CNOT}(i, j)$ in that order to the end of the circuit synthesizing $P_{\sigma}$ has the effect of swapping rows $i, j$. However, as illustrated in \Cref{cyclejoin}, this synthesizes another permutation matrix $P_{\sigma'}$ such that $\sigma'$ has one fewer disjoint cycle than $\sigma$ and additionally $s(P_{\sigma'}) < 3(n-k+1)$. This either contradicts the minimality of $k$ or \Cref{ncycle}, finishing. 
\end{proof}

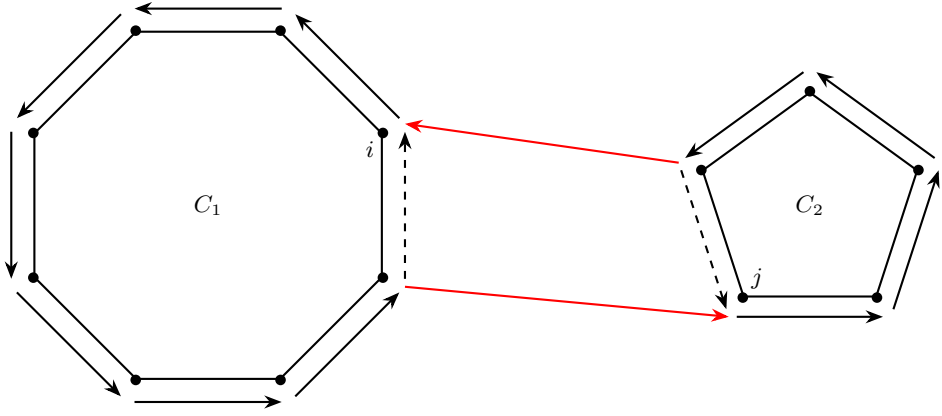
\begin{figure*}[!hbtp]
    \begin{tikzpicture}
    \usetikzlibrary{shapes.geometric}
    \usetikzlibrary{arrows.meta}
    \tikzset{myarrow/.style={-Stealth}}
    
    \node[regular polygon,draw, thick,regular polygon sides = 8, minimum size = 5 cm] (a) at (0,0) {};

    \node[regular polygon,draw=none,regular polygon sides = 8, minimum size = 5.65 cm] (c) at (0,0) {};

    \draw[myarrow,thick, shorten >=3pt, shorten <=3pt] (c.corner 1) -- (c.corner 2);
    \draw[myarrow,thick, shorten >=3pt, shorten <=3pt] (c.corner 2) -- (c.corner 3);
    \draw[myarrow,thick, shorten >=3pt, shorten <=3pt] (c.corner 3) -- (c.corner 4);
    \draw[myarrow,thick, shorten >=3pt, shorten <=3pt] (c.corner 4) -- (c.corner 5);
    \draw[myarrow,thick, shorten >=3pt, shorten <=3pt] (c.corner 5) -- (c.corner 6);
    \draw[myarrow,thick, shorten >=3pt, shorten <=3pt] (c.corner 6) -- (c.corner 7);
    \draw[myarrow,dashed,thick, shorten >=3pt, shorten <=3pt] (c.corner 7) -- (c.corner 8);
    \draw[myarrow,thick, shorten >=3pt, shorten <=3pt] (c.corner 8) -- (c.corner 1);

    \node[regular polygon,draw, thick, minimum size = 3 cm] (b) at (8,0) {};

    \node[regular polygon,draw=none, minimum size = 3.65 cm] (d) at (8,0) {};

    \foreach \x in {1,2,...,8}
    \fill (a.corner \x) circle[radius=2pt];

    \foreach \x in {1,2,...,5}
    \fill (b.corner \x) circle[radius=2pt];

    \draw[myarrow,thick, shorten >=3pt, shorten <=3pt] (d.corner 1) -- (d.corner 2);
    \draw[myarrow,dashed,thick, shorten >=3pt, shorten <=3pt] (d.corner 2) -- (d.corner 3);
    \draw[myarrow,thick, shorten >=3pt, shorten <=3pt] (d.corner 3) -- (d.corner 4);
    \draw[myarrow,thick, shorten >=3pt, shorten <=3pt] (d.corner 4) -- (d.corner 5);
    \draw[myarrow,thick, shorten >=3pt, shorten <=3pt] (d.corner 5) -- (d.corner 1);

    \draw[myarrow,red,thick] (c.corner 7) -- (d.corner 3);
    \draw[myarrow,red,thick] (d.corner 2) -- (c.corner 8);

    \node at (0,0) {$C_1$};
    \node at (8,0) {$C_2$};
    \node[anchor = north east] at (a.corner 8) {$i$};
    \node[anchor = south west] at (b.corner 3) {$j$};
    \end{tikzpicture}
    \centering
    \caption{The effect of adding a swap operation (3 CNOT gates) between two disjoint cycles.}
    \label{cyclejoin}

\end{figure*}

\section{Linkability}
\label{linkability}

In this section, we introduce the linkability problem, a subclass of the global minimization problem proposed in \cite{jiang2020optimal}. Specifically, we determine whether it is possible to synthesize an arbitrary circuit $M$ with $v(M) = 1$ in at most $n - 1$ gates in $O(n^{\omega})$ time. We begin by introducing a few notions used in the algorithm.

\begin{lemma}
    Consider a CNOT decomposition of $n \times n$ matrix $M$. If every CNOT gate is a link gate, then the influence graph $\mathcal{I}(M)$ forms a directed acyclic graph. 
\end{lemma}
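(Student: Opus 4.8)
The plan is to show that an all--link--gate synthesis forces $M$ to be \emph{triangularizable by a relabeling of the qubits}: there is an ordering $\pi$ of $[n]$ such that $M_{j,i}=0$ whenever $j$ precedes $i$ in $\pi$. Granting this, acyclicity of $\mathcal{I}(M)$ is immediate, since an arc $i\to j$ of $\mathcal{I}(M)$ means $M_{j,i}=1$ with $i\neq j$, which forces $i$ to precede $j$ in $\pi$; thus $\pi$ is a topological order of $\mathcal{I}(M)$, and a digraph admitting a topological order is a DAG.

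To produce $\pi$, I would prove by induction on the length of the synthesis $\mathbf I = M_1,\dots,M_{k+1}=M$ the following invariant: $M_t$ is block-diagonal across the connected components of $G_v(M_t)$ (automatic, since off-diagonal nonzeros occur only within components), and for each such component $A$ there is an ordering $\pi_A$ of $A$ making the principal submatrix $M_t[A,A]$ lower triangular with respect to $\pi_A$. The base case $M_1=\mathbf I$ is trivial. For the step, recall from the proof of \Cref{lmc} that a link gate adds a row $i$ to a row $j$ with $i,j$ in distinct components $P\ni i$, $Q\ni j$ of $G_v(M_t)$, and that in the corresponding block form the upper-right block of zeros is unaffected. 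Tracking this single row operation: rows of $P$ are untouched, so $M[P,P]$ and $M[P,Q]$ (still $\mathbf 0$) are unchanged; row $i$ vanishes on the columns of $Q$, so $M[Q,Q]$ is unchanged; only $M[Q,P]$ gains entries. Hence, ordering $P$ by $\pi_P$ and then $Q$ by $\pi_Q$, the submatrix on the new component $P\cup Q$ has the shape $\begin{bmatrix} M_t[P,P] & \mathbf 0\\ * & M_t[Q,Q]\end{bmatrix}$, which is lower triangular by the inductive hypothesis; take $\pi_{P\cup Q}$ to be this concatenation, and leave all other components untouched.

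Applying the invariant at $t=k+1$ and concatenating the orderings $\pi_A$ over the components $A$ of $G_v(M)$ in any order yields the required $\pi$, which finishes the proof. I expect the only delicate point to be the bookkeeping in the inductive step — verifying that the link gate neither perturbs the already-built triangular structure inside $P\cup Q$ nor populates the surviving upper-right zero block — and this rests entirely on the two facts that a link gate changes exactly one row and that the added row is supported on the columns of the control's component. (One also gets that the diagonal stays all ones, since link gates fix the main diagonal by the argument in \Cref{zeroes}, so the triangular form is unitriangular; but only weak lower-triangularity is needed here.)
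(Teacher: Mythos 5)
Your proof is correct, but it takes a genuinely different route from the paper's. The paper argues via rivers: since by \Cref{lmc} a link gate is never a middle gate, an all-link synthesis leaves the river set equal to $S(\mathbf{I})=\{e\}$, whereas a directed cycle $x_1\to x_2\to\cdots\to x_m\to x_1$ in $\mathcal{I}(M)$ would supply a non-identity river (the corresponding cyclic permutation extended by the identity off the cycle, using that link gates also fix the all-ones diagonal) --- a contradiction in three lines once the LMC machinery is available. Your induction instead establishes the equivalent structural fact head-on: $M$ is lower-triangularizable by a relabeling, with the ordering built by concatenating $\pi_P$ and $\pi_Q$ each time a link gate merges components $P$ and $Q$. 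The bookkeeping you flag as the delicate point does check out: block-diagonality across components of $G_v(M_t)$ holds by the definition of $G_v$, the added row $i$ is supported on the columns of $P$ only, and only row $j$ changes, so the merged block is $\bigl[\begin{smallmatrix} M_t[P,P] & \mathbf{0} \\ * & M_t[Q,Q]\end{smallmatrix}\bigr]$ exactly as claimed, and the link gate touches no other component. What each approach buys: yours is more self-contained (it needs only the component-merging fact from \Cref{linkdec}, not the river formalism) and produces an explicit topological order and nested block structure as a byproduct; the paper's is shorter and reuses \Cref{lmc}, which it has to prove anyway. It is worth noting that acyclicity of $\mathcal{I}(M)$ and triangularizability under a relabeling are equivalent statements, so the two proofs arrive at the same characterization from opposite ends.
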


\begin{proof}
    Assume that there is a directed cycle $x_1 \to x_2 \to \cdots \to x_m \to x_1$ in $\mathcal{I}(M)$. Let indices be taken modulo $m$. The permutation 
    $$\sigma(x) = \begin{cases}
    x_{i+1} & x = x_i, i \in [m], \\
    x & \textrm{otherwise,}
    \end{cases}$$
     satisfies $\sigma \in S(M)$. However, this means $S(M) \neq S(\textbf{I})$ while link gates do not affect the set of rivers by \Cref{lmc}, giving contradiction.
\end{proof}

We recall the following result in graph theory, which gives us the time complexity for computing the transitive reduction of an arbitrary graph.

\begin{theorem}[Aho et al. \cite{aho1972transitive} and Furman \cite{furman1970application}]
    The transitive reduction of a directed graph $G$ with $n$ vertices can be computed in $O(n^{\omega})$ time.
\end{theorem}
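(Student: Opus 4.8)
The plan is to reduce the computation of a transitive reduction to the computation of the Boolean transitive closure, which in turn reduces to fast Boolean matrix multiplication.

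First I would dispose of the case where $G$ has directed cycles. Compute the strongly connected components of $G$ in $O(n^2)$ time and form the condensation $G^{*}$, the directed acyclic graph whose vertices are the SCCs of $G$, with an arc from $C$ to $D$ whenever $G$ has an arc from some vertex of $C$ to some vertex of $D$. Inside each SCC $C$ with $|C| = k$, a minimal set of arcs keeping all $k$ vertices mutually reachable is a single directed $k$-cycle on the vertices of $C$ (and the empty set when $k=1$); this is optimal because a strongly connected digraph on $k \ge 2$ vertices has at least $k$ arcs, and a valid $G'$ cannot route the internal reachability of $C$ through vertices outside $C$, since $G'$ and $G$ have the same SCCs. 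It therefore suffices to compute the transitive reduction of the DAG $G^{*}$, then re-expand each vertex into its internal cycle and realize each surviving arc of $G^{*}$ by a single representative arc of $G$ between the corresponding components; this re-expansion is $O(n)$ and preserves both reachability and minimality.

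Next I would use that for a DAG $H$ the transitive reduction is unique and has a closed form in terms of the closure. Let $A^{+}$ be the Boolean transitive closure of $H$, so $(A^{+})_{u,v} = 1$ iff $v$ is reachable from $u$ by a path of length $\ge 1$. Then an arc $(u,v)$ lies in the transitive reduction of $H$ if and only if $(A^{+})_{u,v} = 1$ and $(A^{+} \cdot A^{+})_{u,v} = 0$, i.e., $u$ reaches $v$ but not through any third vertex. Acyclicity of $H$ is exactly what makes this both correct and minimal, since any intermediate vertex certified by $(A^{+} A^{+})_{u,v} = 1$ is automatically distinct from $u$ and $v$. Hence, once $A^{+}$ is known, the transitive reduction is obtained with one more Boolean matrix product and an entrywise test, in $O(n^{\omega})$ time.

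It remains to compute $A^{+}$ in $O(n^{\omega})$ time, which is the crux. One option is to invoke the classical reductions between transitive closure and Boolean matrix multiplication \cite{furman1970application, aho1972transitive}, which give $\mathrm{TC}(n) = O(n^{\omega})$. Alternatively, one can argue directly: topologically order the vertices of $G^{*}$, split them into the first half $V_1$ and the second half $V_2$, recursively compute reachability within $V_1$ and within $V_2$, and fill in the block of reachability from $V_1$ to $V_2$ as $(I + A^{+}_{11})\,A_{12}\,(I + A^{+}_{22})$ in Boolean arithmetic; this is valid because the DAG has no arc from $V_2$ back to $V_1$, so reachability within $V_1$ ignores $V_2$ and every $V_1$-to-$V_2$ path uses exactly one crossing arc. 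This costs a constant number of multiplications of dimension $n/2$, giving $T(n) \le 2\,T(n/2) + O(n^{\omega}) = O(n^{\omega})$ since $\omega \ge 2$. Adding the $O(n^2)$ SCC step and the $O(n)$ re-expansion does not change the bound. The main obstacle is precisely this last step — certifying that transitive closure is no harder than matrix multiplication — together with the bookkeeping needed to confirm that the SCC re-expansion yields a genuinely minimal global arc set.
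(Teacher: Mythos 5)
The paper does not actually prove this statement --- it is quoted as a known result of Aho--Garey--Ullman and Furman --- and your argument is precisely the standard one underlying those references: condense the strongly connected components into directed cycles, extract the Hasse diagram of the condensation DAG via the test $(A^{+})_{u,v}=1$ and $(A^{+}A^{+})_{u,v}=0$, and compute the transitive closure by block divide-and-conquer along a topological order with cost $T(n)\le 2T(n/2)+O(n^{\omega})$. The proposal is correct; the only step you state tersely is global minimality after re-expansion, but your lower bound of $|C|$ arcs inside each nontrivial SCC plus one arc per covering relation of the condensation is exactly the count your construction achieves, so the argument closes.
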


Now, we present the algorithm for the linkability problem in its entirety.

\begin{theorem}
\label{reversibility}
    Given an $n \times n$ reversible binary matrix $M$ such that $v(M) = 1$, it is possible to determine whether $s(M) \leq n - 1$ in $O(n^\omega)$ time.
\end{theorem}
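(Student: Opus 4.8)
The plan is to reduce the decision to a structural question about link gates. Since $v(M)=1$, \Cref{linksandcuts} gives $s(M)\ge n-1$, so ``$s(M)\le n-1$'' means exactly $s(M)=n-1$, and by \Cref{lmc} together with \Cref{linksandcuts} a synthesis of size $n-1$ must consist entirely of link gates (it needs at least $n-1$ of them), while conversely any synthesis by $n-1$ link gates is minimal. So I want to decide whether $M$ is a product of $n-1$ CNOT gates all of which are link gates. Because the first CNOT occurring between two distinct components of $G_v$ is a link gate and every link gate merges two components, such a synthesis is exactly one CNOT gate per edge of a spanning tree $T$ of $[n]$, together with a choice, for each edge, of which endpoint is the control, and a total order on the edges.

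The next step is to track how the rows evolve. Starting from $\mathbf I$, row $i$ is $e_i$, and each gate $\mathrm{CNOT}(c,t)$ XORs the current row $c$ into row $t$; since $T$ is a tree, the ``token'' $e_i$ ends up in row $j$ precisely when the unique $T$-path from $i$ to $j$ is oriented control-to-target all the way along and its edges fire in strictly increasing order of time. Hence $M_{j,i}=1$ iff this holds. In particular the oriented tree $T$ is a subgraph of $\mathcal I(M)$, while $\mathcal I(M)$ lies inside the transitive closure of $T$; combined with the preceding lemma (a link-only synthesis makes $\mathcal I(M)$ acyclic) and the uniqueness of transitive reductions of DAGs, this pins down $T$ as the transitive reduction $\mathcal I'$ of $\mathcal I(M)$, orientation and all. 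Reading off the token description once more, if $M_{j,i}=1$ and $i',j'$ lie on the $T$-path from $i$ to $j$ in that order, then $e_{i'}$ also reaches row $j'$, so $M_{j',i'}=1$: the relation $R=\{(i,j):M_{j,i}=1\}\cup\{(i,i)\}$ is closed under passing to subpaths of $\mathcal I'$. Thus three conditions are necessary: $\mathcal I(M)$ is a DAG; $\mathcal I'$ is a spanning tree (equivalently, since the undirected graph underlying $\mathcal I'$ is connected, $\mathcal I'$ has exactly $n-1$ edges); and $R$ is subpath-closed with respect to $\mathcal I'$.

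For the converse I would show that these conditions let me build a valid ordering of $E(\mathcal I')$, hence an explicit $(n-1)$-gate synthesis. The idea is to schedule the edges in a topological-style sweep: subpath-closure guarantees that along $\mathcal I'$ the ``reach depths'' of the tokens that should pass through a given edge vary monotonically, so each edge can be given a firing time late enough that every token that should traverse it has already arrived, yet early enough that no token that should stop there crosses it. Correctness of the resulting product is then checked by induction, peeling one leaf of $\mathcal I'$ at a time; the peeling mirrors the recursive block structure in which the off-diagonal block on one side of the leaf's edge is zero and on the other side has rank one.

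For the running time: forming $\mathcal I(M)$ and testing acyclicity is $O(n^2)$; computing $\mathcal I'$ is $O(n^{\omega})$ by the transitive-reduction algorithm of Aho et al.\ and Furman cited above; and counting the edges of $\mathcal I'$ together with verifying subpath-closure reduce to a bounded number of ancestor/subtree queries on the tree $\mathcal I'$, costing $O(n^2)$; so the procedure runs in $O(n^{\omega})$. The step I expect to be the real obstacle is the converse direction: producing a provably correct ordering of the tree edges and showing that subpath-closure is precisely the condition under which a consistent ordering exists. Necessity and the complexity accounting should be comparatively routine once the token description of link-only syntheses is in hand.
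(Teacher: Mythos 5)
Your necessity analysis matches the paper's: since $v(M)=1$ and $s(M)\ge n-1$, a synthesis of size $n-1$ consists only of link gates, each tree edge fires exactly once, the oriented tree is forced to be the transitive reduction $\mathcal{I}'$ of $\mathcal{I}(M)$, and the ``token'' description of which $e_i$ reaches which row is correct. The gap is in your converse. You propose to prove that your three conditions (acyclicity, $\mathcal{I}'$ a spanning tree, subpath-closure of $R$) suffice, i.e.\ that subpath-closure guarantees a consistent firing order of the tree edges. That statement is false. Take $n=5$ with $M_{3,1}=M_{3,2}=M_{4,3}=M_{5,3}=M_{4,1}=M_{5,2}=1$, diagonal ones, and all other entries zero (a lower-triangular, hence invertible, matrix with $v(M)=1$). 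Then $\mathcal{I}'$ is the spanning star $1\to 3$, $2\to 3$, $3\to 4$, $3\to 5$, and $R$ is subpath-closed: the only non-edge pairs in $R$ are $(1,4)$ and $(2,5)$, whose subpaths $(1,3),(3,4)$ and $(2,3),(3,5)$ all lie in $R$. Yet the forced firing constraints are cyclic: $M_{4,1}=1$ forces $(1,3)$ before $(3,4)$; $M_{4,2}=0$ forces $(3,4)$ before $(2,3)$; $M_{5,2}=1$ forces $(2,3)$ before $(3,5)$; and $M_{5,1}=0$ forces $(3,5)$ before $(1,3)$. So no ordering exists and $s(M)>4$, even though your three conditions hold. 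Subpath-closure constrains each root-to-leaf ``reach'' individually but says nothing about the compatibility of the orderings induced at a shared vertex by different tokens.

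The paper avoids needing any sufficiency theorem. Having forced the gate set to be the edges of $\mathcal{I}'$, it observes that the only non-commuting pairs among these CNOT gates are head-to-tail adjacent pairs $(u,v),(v,w)$, and for each such pair the relative order is forced by the single entry $M_{w,u}$. Hence the entire synthesis is determined up to reorderings that do not change the product; one simply executes any ordering consistent with these pairwise constraints (or detects that none exists, as in the example above), obtains a matrix $N$ in $O(n^2)$ time, and answers yes if and only if $N=M$. This ``reconstruct the unique candidate and verify'' step is the idea your proposal is missing; with it, the hard characterization you flagged as the obstacle never has to be proved.
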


\begin{proof} \;

\begin{steps}
\item  We begin by assuming that $M$ satisfies $s(M) \leq n - 1$. Then by \Cref{finalboss}, $s(M) = n-1$. By assumption, there exists a CNOT synthesis $M_1, M_2, \ldots, M_{n}$ of $M$. It is clear that all the CNOT gates must be link gates since $v(M) = 1$. We compute the influence graph $\mathcal{I}(M) = (V, A)$, which takes $O(n^2)$ time. 

\item  Let $\mathcal{I'} = (V, A')$ be the transitive reduction of $\mathcal{I}$. We claim that for each $(i, j) \in A'$, one of the $n-1$ CNOT operations used in the synthesis of $M$ must add row $i$ to row $j$. We prove this as follows. If $(i, j) \in A'$, there must not have been a directed path from $i$ to $j$ of length $\geq 2$ in $\mathcal{I}$. Assume that no CNOT operation adds row $i$ to row $j$. Then, there exists a sequence $k_1, k_2, \ldots, k_s$ with $k_1 = i$ and $k_s = j$ such that there is a CNOT gate adding row $k_i$ to $k_{i+1}$ for all $i \in [s-1]$. However, this implies each $(k_i, k_{i+1}) \in A$ as link gates cannot remove nonzero entries. This is a directed path of length $\geq 2$ from $i$ to $j$, giving contradiction and completing the proof.

Hence $\mathcal{I'}$ both maintains connectivity and has at most $n-1$ edges, so it must be a spanning tree. The exact set of CNOT gates can thus be read off of $A'$.

 \item Now, we determine the order in which these CNOT operations took place. For any 3 vertices $(u, v, w)$ in $\mathcal{I'}$ where $(u, v), (v, w) \in A'$, it is easy to see that $M_{w, u} = 1$ if and only if $\textrm{CNOT}(u, v)$ happens before $\textrm{CNOT}(v, w)$, as $(u, w) \notin A'$ and no other directed path from $u$ to $w$ exists in $\mathcal{I'}$. 

\item From the previous step, we know the relative ordering of all adjacent CNOT gates. Any ordering of the $n-1$ CNOT gates respecting these relative orders synthesizes the same matrix. Thus, we can pick any such full ordering and perform the CNOT operations step by step on $\textbf{I}$ to yield a matrix $N$. This entire process requires $O(n^2)$ time. If $M = N$, we have just found a construction to obtain $M$ in exactly $n-1$ CNOT gates. Otherwise, if $M \neq N$, this is a contradiction on the original assumption that $s(M) \leq n - 1$, finishing.
\end{steps}

\vspace{0.5cm}
The overall time complexity of this algorithm is $O(n^2) + O(n^\omega) + O(n^2) + O(n^2) = O(n^\omega)$, as desired.
\end{proof}

\section{Results}
\label{results}
For $n \leq 5$, the size of all linear reversible CNOT circuits can be easily found using a BFS algorithm over the set of all $n$-qubit circuits. In \Cref{accuracy}, we provide various metrics on how well the lower bound on size from \Cref{finalboss} matches the exact size for $n$-qubit circuits. In \Cref{heatmaps}, we provide heatmaps for the confusion matrix between the lower bound and the exact size. The raw data used can be found in Appendix \ref{n5}.

\begin{table}[!hbtp]
\centering
{\renewcommand{\arraystretch}{1.2}
\scalebox{1.1}{

\begin{tabular}{|c|c|c|c|}
\hline 
\;Metric\; & \;$n \leq 3$\; & \;$n = 4$\; & \;$n = 5$\; \\ 
\hline
$\Delta = 0$ & $100 \%$ & $67.7 \%$ & $23.1 \%$ \\
$\Delta \leq 1$ &  $100 \% $ & $99.5 \%$ & $83.0 \%$ \\
$\Delta \leq 2$  & $100 \%$ & $100 \%$ & $99.7 \%$ \\
$\sigma$ &  0 & 0.581 & 1.137\\
$MAD$ &  0 & 0.328 & 0.941\\
$R^2$  & 1 & 0.811 & 0.649 \\
$PCC$  &  1 & 0.901 & 0.806 \\
\hline
\end{tabular}
}
}
\caption{Various metrics on \Cref{finalboss} for $n$-qubit circuits (here, $\Delta$ denotes the deviation between lower bound and actual size). }
\label{accuracy}
\end{table}

\begin{figure*}[!hbtp]
\centering
\begin{subfigure}[b]{0.25\textwidth}
\centering
    \includegraphics[width=\textwidth]{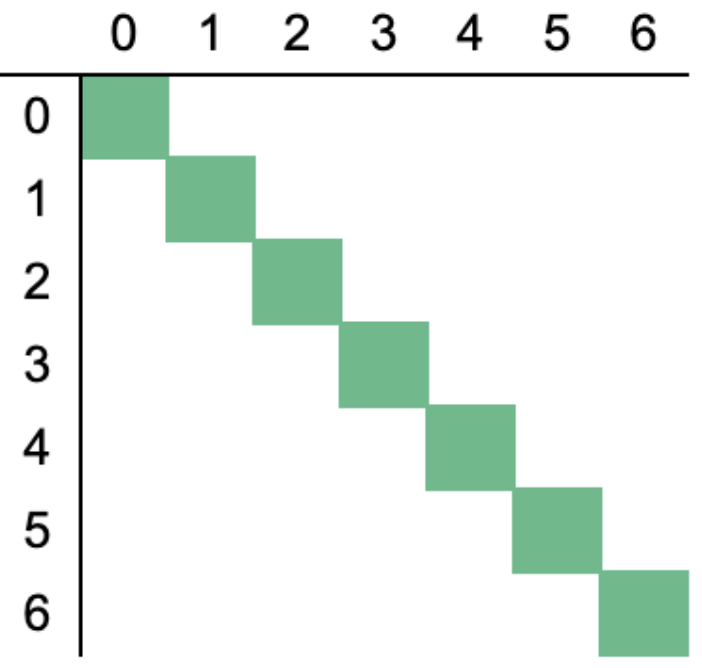}
\end{subfigure}
\hfill
\begin{subfigure}[b]{0.25\textwidth}
\centering
    \includegraphics[width=\textwidth]{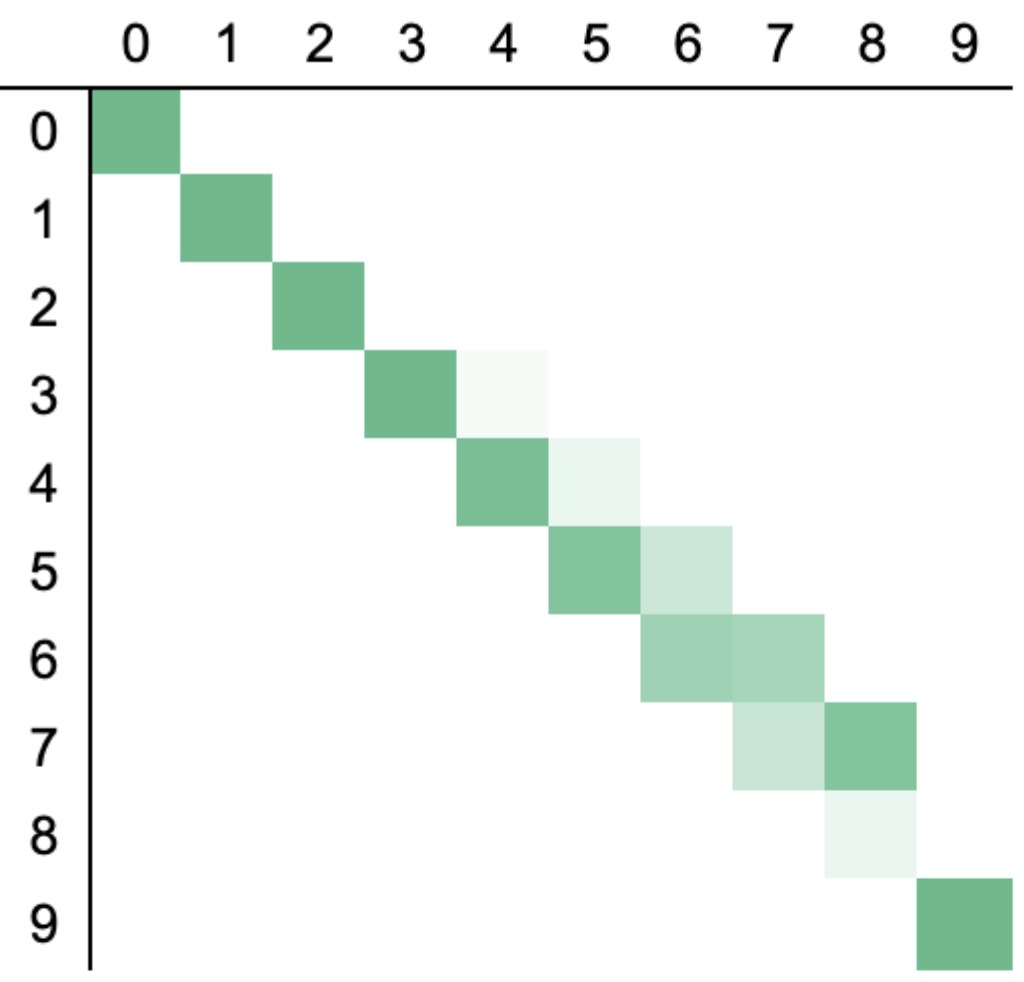}
\end{subfigure}
\hfill
\begin{subfigure}[b]{0.25\textwidth}
\centering
    \includegraphics[width=\textwidth]{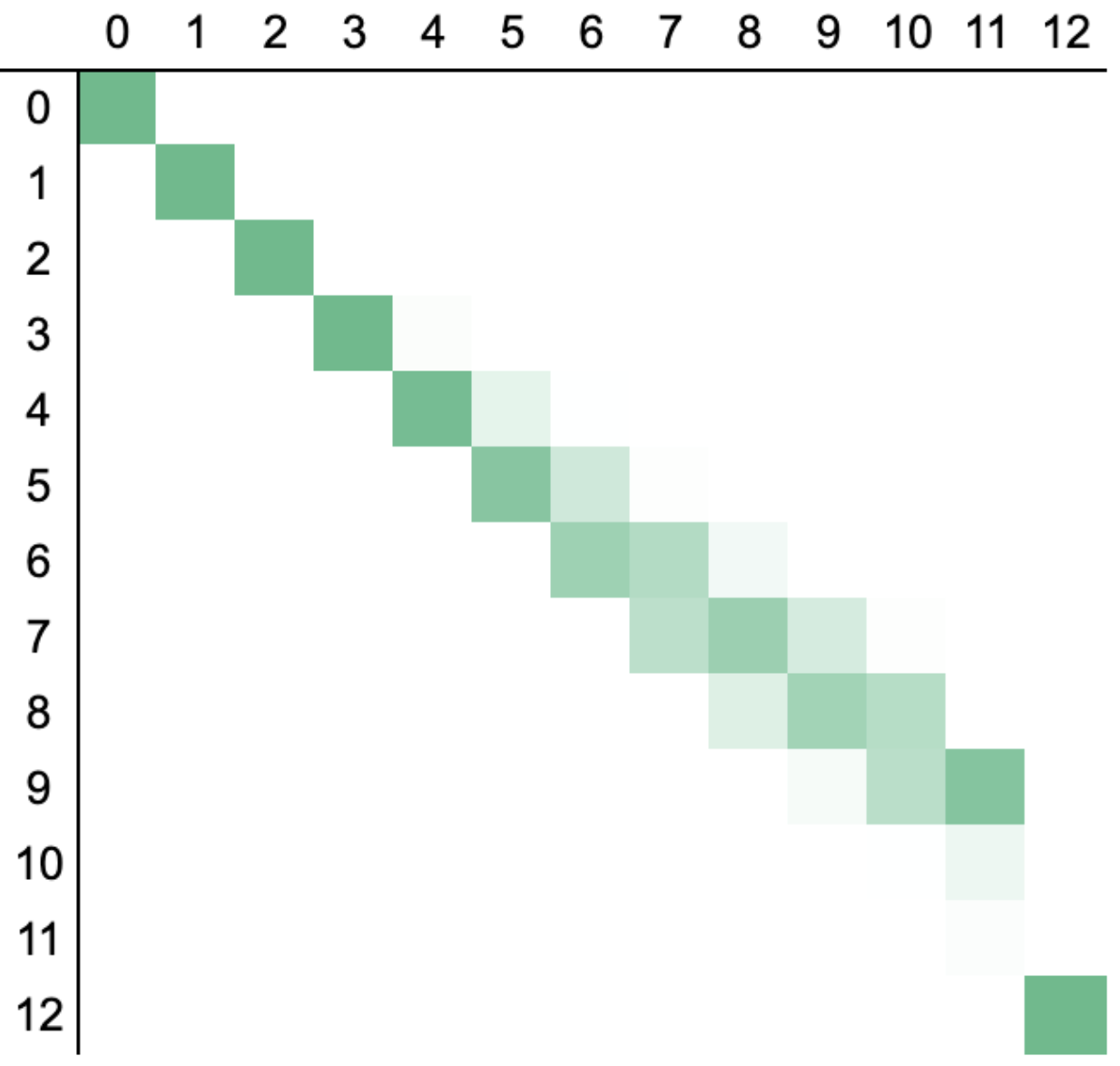}
\end{subfigure}
\caption{Column-by-column heatmaps for $n = 3, 4, 5$ between the lower bound in \Cref{finalboss} (on the vertical) and the size of the circuit (on the horizontal).}
\label{heatmaps}
\end{figure*}

Although our lower bound performs well at various metrics for $n \leq 5$, Jiang et al. show in \cite{jiang2020optimal} that as $n$ grows larger, the average size of an arbitrary $n \times n$ binary reversible circuit grows to $O(\frac{n^2}{\log n})$ while our lower bound can be at most $3(n-1)$. This means that although our lower bound still holds for large $n$, it cannot be used as a size-approximation algorithm for large $n$ without further improvements.

\section{Conclusion and Future Directions}
\label{conclusion}

In this paper, we introduce a categorization of the gates of CNOT decompositions into link, middle, and cut gates. Under this framework, we derive lower bounds on the number of CNOT gates in each category and combine them to obtain a lower bound on the total number of CNOT gates needed to synthesize a given circuit that avoids brute-force search and can be obtained in $O(n^{\omega})$ time. We apply our framework to the problem of synthesizing the $n$-cycle matrix to show that known constructions with $3(n-1)$ CNOT gates are optimal, and further show the size of any permutation matrix is $3(n-k)$, where $k$ is the number of disjoint cycles in the permutation. 

 We introduce an $O(n^\omega)$ time algorithm determining whether a given circuit with $v(M) = 1$ can be synthesized in fewer than $n$ CNOT operations, resolving a subclass of the minimum circuit size problem. We hope the methods and ideas developed in this paper will aid the search for further polynomial-time lower bounds on the size of CNOT circuits, as well as demonstrate the optimality or near-optimality of known CNOT decompositions.

To end, we present several interesting future directions on both the minimum circuit size problem as well as for improving our lower bound in \Cref{finalboss}.

\begin{problem}
    In \cite{patel2003efficient}, an information theoretic bound is used to show that the maximum size of a circuit on $n$ qubits is at least $\frac{n^2}{2\log_2 n}$. However, in the cases $n \leq 5$, the maximum size of an $n$-qubit circuit is always $3(n-1)$, attained by the $n$-cycle circuit. This is plausible for all $n \leq 27$, as we have $\frac{n^2}{2\log_2 n} \leq 3(n-1)$. At what point does the asymptotic bound overtake the $3(n-1)$ pattern? 
\end{problem}
\begin{problem}
 In the $n=5$ case, a single circuit, alongside those equivalent to it through \Cref{sizeequivalence}, lies far off the main diagonal in the table in Appendix \ref{n5}. It turns out that $M \land M^{-\top} + \mathbf{I} = \mathbf{0}$ (see \Cref{fig:mainfig}), rendering the $c_{\textrm{perfect}}$ bound ineffective. Can we improve the $c_{\textrm{perfect}}$ algorithm to account for this ``$c_{\textrm{perfect}}$ glitch" and obtain a more powerful lower bound on the number of middle CNOT gates?

 \begin{figure*}[!hbtp]
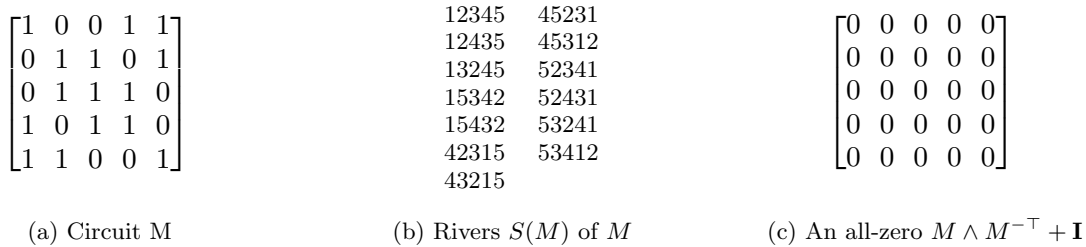

    \centering
    \begin{subfigure}[b]{0.3\textwidth}
        \centering
        \scalebox{1.2}{
        $\begin{bmatrix}
            1 & 0 & 0 & 1 & 1 \\
            0 & 1 & 1 & 0 & 1 \\
            0 & 1 & 1 & 1 & 0 \\
            1 & 0 & 1 & 1 & 0 \\
            1 & 1 & 0 & 0 & 1
        \end{bmatrix}$
        }
        \vspace{0.35cm}
        \caption{Circuit M}
    \end{subfigure}
    \begin{subfigure}[b]{0.3\textwidth}
        \begin{itemize}[itemsep=0pt, parsep=0pt, label={}]
            \item \;\, \qquad 12345 \quad 45231    
            \item  \;\, \qquad 12435 \quad 45312
            \item \;\, \qquad 13245 \quad 52341
            \item \;\, \qquad 15342 \quad 52431
            \item \;\, \qquad 15432 \quad 53241
            \item \;\, \qquad 42315 \quad 53412
            \item \;\, \qquad 43215 \quad 
        \end{itemize}
        \caption{Rivers $S(M)$ of $M$}
    \end{subfigure}
    \begin{subfigure}[b]{0.3\textwidth}
        \centering
        \scalebox{1.2} {
        $\begin{bmatrix}
            0 & 0 & 0 & 0 & 0 \\ 
            0 & 0 & 0 & 0 & 0 \\ 
            0 & 0 & 0 & 0 & 0 \\ 
            0 & 0 & 0 & 0 & 0 \\ 
            0 & 0 & 0 & 0 & 0 
        \end{bmatrix}$
        }
        \vspace{0.35cm}
        \caption{An all-zero $M \land M^{-\top} + \mathbf{I}$}
    \end{subfigure}
    \caption{The $c_{\textrm{perfect}}$ glitch}
    \label{fig:mainfig}
\end{figure*}
\end{problem}

\begin{problem}
In addition to permutations, are there other classes of circuits for which the bound in \Cref{finalboss} is optimal?
\end{problem}
\begin{problem}
Is it possible to extend our methods to circuits with ancillae, topological constraints, or one-qubit gates such as $T$ gates and Hadamard gates?
\end{problem}

\section{Acknowledgements}
We thank Yunseo Choi and Kevin Cong for their valuable feedback. 

\bibliographystyle{ieeetr} 
\bibliography{apssamp}

\appendix
\section{Performance of \Cref{finalboss} for Circuits with $4$ or $5$ Qubits}
\label{n5}

In this appendix, the lower bound given in \Cref{finalboss} is plotted against the exact size of the circuit in the $4$ and $5$-qubit cases, with the entry in row $i$ and column $j$ denoting the number of linear reversible circuits with lower bound $i$ and size $j$. For $n \leq 3$, equality holds.

\begin{figure*}
{\renewcommand{\arraystretch}{1.2}
\resizebox{10cm}{!}{
\begin{tabular}{c|cccccccccc}
 \qquad &\;\;0\;\;&\;\;1\;\;&\;\;2\;\;&\;\;3\;\;&\;\;4\;\;&\;\;5\;\;&\;\;6\;\;&\;\;7\;\;&\,\;\;8\;\;&\,\,\;\;9\;\;\\
\hline
\; 0 \; & 1 &  &  &  &  &  &  &  &  &  \\
1 &  & 12 &  &  & & & & & &  \\
2 &  &  & 96 & & & & & & &  \\
3 &  &  & & 542 & 138 & & & & & \\
4 &  &  & & & 1920 & 756 & 12 & & & \\
5 &  &  &  &  &  & 4560 & 2589 & 84 & & \\
6 &  &  &  &  &  &  & 4929 & 2464 & & \\
7 &  &  &  &  &  & & & 1510 & 469 & \\
8 &  &  &  &  &  &  &  & & 72& \\
9 &  &  &  &  &  &  &  & & & 6 \\

\end{tabular}
}}

\caption{The distribution of $4$-qubit circuits, with the lower bound in \Cref{finalboss} plotted on the vertical and the size of the circuit plotted on the horizontal.}
\end{figure*}

\begin{figure*}[!hbtp]
\centering
\begin{center}
{\renewcommand{\arraystretch}{0.8}
\resizebox{17cm}{!}{%
\begin{tabularx}{1.0\textwidth} { 
   >{\centering\arraybackslash}X |
   >{\centering\arraybackslash}X 
   >{\centering\arraybackslash}X >{\centering\arraybackslash}X 
   >{\centering\arraybackslash}X >{\centering\arraybackslash}X 
   >{\centering\arraybackslash}X >{\centering\arraybackslash}X 
   >{\centering\arraybackslash}X >{\centering\arraybackslash}X 
   >{\centering\arraybackslash}X >{\centering\arraybackslash}X >{\centering\arraybackslash}X 
   >{\centering\arraybackslash}X }
& \multicolumn{1}{c}{0} & \multicolumn{1}{c}{1} & \multicolumn{1}{c}{2} & \multicolumn{1}{c}{3} & \multicolumn{1}{c}{4} & \multicolumn{1}{c}{5} & \multicolumn{1}{c}{6} & \multicolumn{1}{c}{7} & \multicolumn{1}{c}{8} & \multicolumn{1}{c}{9} & \multicolumn{1}{c}{10} & \multicolumn{1}{c}{11} & \multicolumn{1}{c}{12} \\ \hline
\\[-0.2cm]
0  & 1                     &                       &                       &                       &                       &                       &                       &                       &                       &                       &                        &                        &                        \\ 
1  &                       & 20                    &                       &                       &                       &                       &                       &                       &                       &                       &                        &                        &                        \\ 
2  &                       &                       & 260                   &                       &                       &                       &                       &                       &                       &                       &                        &                        &                        \\ 
3  &                       &                       &                       & 2570                  & 690                   &                       &                       &                       &                       &                       &                        &                        &                        \\ 
4  &                       &                       &                       &                       & 18990                 & 20540                 & 3640                  &                       &                       & 12                    &                        &                        &                        \\ 
5  &                       &                       &                       &                       &                       & 97320                 & 176505                & 37880                 & 1320                  &                       &                        &                        &                        \\ 
6  &                       &                       &                       &                       &                       &                       & 360325                & 917960                & 322750                & 11580                 &                        &                        &                        \\ 
7  &                       &                       &                       &                       &                       &                       &                       & 813870                & 2448985               & 925340                & 15840                  &                        &                        \\ 
8  &                       &                       &                       &                       &                       &                       &                       &                       & 798120                & 2072000               & 365540                 &                        &                        \\ 
9  &                       &                       &                       &                       &                       &                       &                       &                       &                       & 216378                & 350120                 & 13340                  &                        \\ 
10 &                       &                       &                       &                       &                       &                       &                       &                       &                       &                       & 5040                   & 1920                   &                        \\ 
11 &                       &                       &                       &                       &                       &                       &                       &                       &                       &                       &                        & 480                    &                        \\ 
12 &                       &                       &                       &                       &                       &                       &                       &                       &                       &                       &                        &                        & 24 
\end{tabularx}
}
}
\caption{The distribution of $5$-qubit circuits, with the lower bound in \Cref{finalboss} plotted on the vertical and the size of the circuit plotted on the horizontal.}
\end{center}
\end{figure*}

\end{document}